\newcommand{\I}{\mathrm{i}}
\theoremstyle{definition}
\newtheorem{definition}{Definition}[section]
\theoremstyle{plain}
\newtheorem{proposition}{Proposition}[section]
\newtheorem{theorem}{\bf Theorem}[section]
\theoremstyle{remark}
\newtheorem*{remark}{Remark}
\begin{document}
\title{Edge modes in modulated metamaterials based on the three-gap theorem}
\author{Yinglai Wang}
\affiliation{
    Department of Mathematics, Imperial College London, London SW7 2AZ, UK
}

\author{Bryn Davies}
\affiliation{
    Mathematics Institute, University of Warwick, Coventry CV4 7AL, UK
}

\author{Marc Mart{\' i} Sabat{\' e}}
\email{m.marti-sabate23@imperial.ac.uk}
\affiliation{
    Department of Mathematics, Imperial College London, London SW7 2AZ, UK
}
\date{\today}

\begin{abstract}
We present a new paradigm for generating complex structured materials based on the three-gap theorem that unifies and generalises several key concepts in the study of localised edge states. Our model has both the discretised coupling strengths of the Su--Schrieffer--Heeger model and a modulation parameter that can be used to characterise the spectral flow of edge modes and produce images reminiscent of the Hofstadter butterfly. By defining a \textit{localisation factor} associated to each eigenmode, we are able to establish conditions for the existence of localised edge states in finite systems. This allows us to compare their eigenfrequencies with the spectra of the corresponding infinitely periodic problem and characterise the rich pattern of localised edge modes appearing and disappearing (in the sense of becoming delocalised) as the parameters of our three-gap algorithm are varied.
\end{abstract}
\maketitle

\section{Introduction}
Understanding and controlling wave localisation in complex media has been one of the main objectives of wave physics in recent decades. Advances in micro- and nano-scale manufacturing have allowed the scientific community to create more intricate and complex structures, achieving outstanding wave control properties even in deeply subwavelength regimes \cite{smith2004metamaterials, craster2012acoustic, soukoulis2011past}. Nonetheless, wave localisation remains difficult to predict in general, especially when a material has highly intricate small-scale structure.

One of the most fundamental examples of wave localisation is the occurrence of \emph{edge states}. These are eigenstates which are localised at the edges of a finite-sized piece of a material. Edge states are observed in many different settings but are particularly prevalent at the edges of materials which support spectral gaps. These gaps correspond to energies of waves which are not able to travel through the material and whose energy is typically reflected back towards the source. As a result, if an eigenstate falls at an energy within a gap, we can expect it to be localised. The challenge, however, is to predict when these eigenstates exist. There are two main approaches that have been considered in the literature: topological indices and spectral flow.

The generation of edge states due to topological indices is a well-studied phenomenon in the setting of periodic systems. The canonical example of this is the Su--Schrieffer--Heeger (SSH) model \cite{su1980soliton} which features alternating coupling strengths between discrete elements (inspired by electron dynamics in the presence of the alternating coupling strengths of polyacetylene \cite{su1980soliton}). The properties of these localised states are intimately related to the underling topology of the periodic material, which can be characterised in terms of a topological index known as the \emph{Zak phase} \cite{zak1989berry}. This can be related to the surface impedance, which can in turn be used to prove the existence of edge states \cite{xiao2014surface, thiang2023bulk, coutant2024surface, alexopoulos2023topologically}. The key feature of the SSH model (two alternating coupling strengths) can be extended naturally to models with three different couplings, which are often known as ``SSH3" \cite{anastasiadis2022bulk, martinez2019edge, verma2024emergent, zhang2021topological} and whose topological properties can likewise be used to characterise the existence of localised modes.

Another popular tool for predicting the existence of edge states is to calculate the spectral flow associated to edge modes in a gap. That is, keeping track of the net number of eigenvalues that cross a spectral gap as a parameter of the system is varied. This approach has gained popularity recently as it gives a direct intuitive approach to understand not only the existence of edge states, but also insight on how to tune their properties \cite{chaplain2023tunable}. It has been applied to periodic systems \cite{miniaci2023spectral, ammari2022robust, drouot2020defect, drouot2021bulk} and also generalised to non-periodic structures, especially those based on modulated periodic systems \cite{ni2019observation, xia2020topological} (which, it should be noted, can also be handled using topological approaches derived from K-theory \cite{apigo2018topological}). In the case of these modulated systems, the parameter that is varied when considering the spectral flow is the phase of the modulation relative to the periodicity of the lattice \cite{ni2019observation, xia2020topological}. This modulation is sometimes known as ``Harper modulation" after Harper's equation \cite{harper1955single} and the behaviour of the spectrum as a function of the modulated phase parameter often leads to fractal plots that are reminiscent of the Hofstadter butterfly \cite{hofstadter1976energy}.

In this work, we present a new paradigm for generating structured materials that unites the key geometric principles of both the SSH model and Harper modulation. We exploit the three-gap theorem, which says that when an arbitrary number of points are placed around a circle at equal intervals, there are only ever at most three different distances between adjacent points (even when the placement algorithm has wrapped around the circle multiple times) \cite{Allouche_Shallit_2003}. This presents an approach to generating geometries that only ever have at most three distinct distances (like the three distinct coupling strengths of SSH3) and have a phase parameter that can be naturally modulated to study the spectral flow of edge states (in this case, the fixed angle used in the placement algorithm).

The three-gap geometries proposed here could be considered in many different physical settings. In this work, we consider a canonical example of a mass-spring lattice where the spring constants are taken to be inversely proportional to the distance between adjacent points. We will study the Floquet-Bloch spectra of periodic materials with unit cells generated by the three-gap algorithm and then examine the existence of edge states when finite-sized lattices are considered with Dirichlet boundary conditions (\emph{i.e.} the masses at the ends are fixed). Our analysis will be based on studying a \emph{localisation factor} associated to each eigenmode. This tool quantifies the decay rate and allows us to make precise statements about the exponential localisation of eigenmodes in finite systems \cite{ammari2024exponentially}. This allows us to make initial steps towards unifying the ideas of SSH models and Harper modulated systems, which have previously been characterised by topological indices and notions of spectral flow, respectively.

This article is organised as follows: in Section \ref{sec:ProblemSetting} we will introduce the physical setting and the notation used for the three-gap algorithm on which the structured materials considered in this work are based. We will also state the three different eigenvalue problems that we will consider (which correspond to infinitely periodic systems and finite systems with one or more unit cells). Sections~\ref{gen:finite} and \ref{subsec:RelationToTheBulkSpectra} will prove some general statements about the existence of localised modes in finite systems and their relation to the spectra of the associated infinitely periodic systems. These results are not dependent on the specific geometrical and physical properties of our three-gap system. In Section~\ref{sec:DeterminationOfFloquetBlochAngles}, these general statements will be applied to the specific case of the three-gap geometries, focusing on the conditions that vanish the edge states from the spectra.

\section{Problem setting}
\label{sec:ProblemSetting}

Let us imagine a circle of unitary perimeter and pick a fixed angle $\theta \in [0,1)$. Starting from a given point (that we will call $0$ or the origin), we form a set of points on the circle by rotating $m$ times the angle $\theta$ for $m=0,1,\dots,N$. This is depicted in Figure~\ref{fig_scheme} for $N=4$. After generating the set of points on the circle, we map this onto a one-dimensional system by uncurling the circle into a straight line and rescaling so that the total length is $N$ (meaning the average density of points is constant, irrespective of $N$). This gives the set of points
\begin{equation} \label{eq:points}
   C_N^\theta=\{\theta_j = N\mathrm{frac}(j \theta):j=0,\dots,N-1\},
\end{equation}
where \(\mathrm{frac}\) denotes the fractional part of a real number, defined as $\mathrm{frac}(x)=x-\lfloor x\rfloor$. For ease of notation, this set is labelled in increasing order so that
\[C_N^\theta=\{x_j\}_{j=0}^{N-1},\]
and \(x_j\leq x_{j+1}\) for all \(j\). Notice that $x_0=0$. We also let \(x_N=N\).

\begin{figure}[!h]
\centering\includegraphics[width=0.8\linewidth]{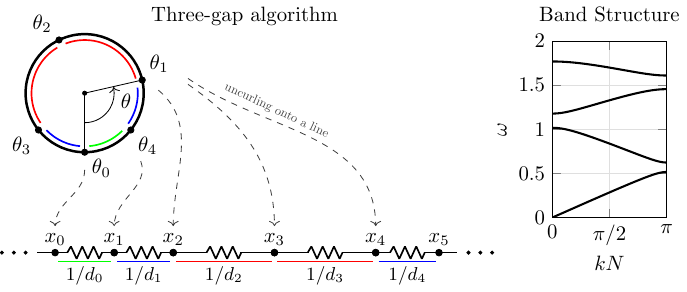}
\caption{Schematic of the three-gap algorithm for unit cell formation. Points on the circle are generated by repeatedly rotating a given angle $\theta$ and then uncurled onto a straight line. These points become the masses of a mass-spring system whose spring constants are inversely proportional to the distance between neighbouring points. On the right hand side, the spectral band structure corresponding to the infinitely periodic system is shown.}
\label{fig_scheme}
\end{figure}

We will convert the set of points \eqref{eq:points} generated by the three-gap algorithm into a mass-spring system by placing identical masses at each of the points and connecting them to their immediate neighbours with springs whose stiffnesses are inversely proportional to the distances between the two masses. That is, since the point \(x_j\) denotes the equilibrium position for the \(j\)-th mass, its neighbouring distance is \(d_j=x_{j+1}-x_j\) for \(j=0,\dots,N-1\). Then, the \(j\)-th spring (which connects the \(j\)-th and \(j+1\)-th masses) has spring constant \(1/d_j\). This is sketched in Figure~\ref{fig_scheme}.

Assuming all the masses have equal unit mass, by Newton's second law the motion of the $j$-th mass is described by the governing equation
\begin{align}
    \frac{d^2u_j}{dt^2} &= \frac{1}{d_j}(u_{j+1}-u_j)-\frac{1}{d_{j-1}}(u_j-u_{j-1})
    = \frac{1}{d_j}u_{j+1}-\left(\frac{1}{d_j}+\frac{1}{d_{j-1}}\right)u_j+\frac{1}{d_{j-1}}u_{j-1},
    \label{eq:DiffEquation}
\end{align}
where \(u_j(t)\) denotes the displacement from \(x_j\) of the \(j\)-th mass at time $t$. Depending on the boundary conditions applied on the displacement of masses $u_0$ and $u_N$ we will consider two different kinds of spectra: a continuous Floquet-Bloch spectrum when the system is periodic and a discrete spectrum when the system has finite size with fixed boundary conditions.

\subsection{Floquet-Bloch spectrum}

We, first, consider the spectrum of the infinitely periodic system formed by repeating the unit cell given by the $N$ masses from $C_N^\theta$. Since we are assuming the masses are identical, the infinite periodic mass-spring system is defined by the infinite periodic sequence of distances \(\{d_j\}_{j=-\infty}^\infty\) generated by the finite sequence \(\{d_j\}_{j=0}^{N-1}\), where \(d_i=d_j\) if \(i\equiv j \pmod N\). Then, the equilibrium equation gives
\[\frac{d^2u_j}{dt^2}=\frac{1}{d_j}u_{j+1}-\left(\frac{1}{d_j}+\frac{1}{d_{j-1}}\right)u_j+\frac{1}{d_{j-1}}u_{j-1}, \, j\in \mathbb{Z}.\]
Assuming we are in a time-harmonic regime, such that \(u_j(t)=v_j e^{-\I\omega t}\), the above expression becomes
\begin{equation}
\label{infinite_sys_diff}
    v_{j+1}=d_j\left(\frac{1}{d_j}+\frac{1}{d_{j-1}}-\omega^2\right)v_j-\frac{d_j}{d_{j-1}}v_{j-1}, \, j\in \mathbb{Z}.
\end{equation}

The theory of \textit{Floquet-Bloch} for periodic operators \cite{kuchment2016overview} tells us that propagating solutions to this problem (\emph{i.e.} those which don't grow or decay) can be expressed as
\begin{equation}
    v_{j+N}=v_j e^{\I Nk}, \, j\in \mathbb{Z},    
\end{equation}
for some \(k \in \mathbb{R}\). 
%
Hence, there exists a solution to \eqref{infinite_sys_diff} with \textit{Floquet exponent} \(k\) if and only if
\begin{equation}
\widetilde{K}(k)\boldsymbol{v}=\omega^2\boldsymbol{v}
\end{equation}
for some nontrivial \(\boldsymbol{v}=(v_0,v_1,\dots, v_{N-2},v_{N-1})^T\), where
\[
\widetilde{K}(k)=
\begin{pmatrix}
\frac{1}{d_0}+\frac{1}{d_{N-1}} & -\frac{1}{d_0} & 0 & \cdots & 0 & -\frac{1}{d_{N-1}} e^{-\I Nk}\\
-\frac{1}{d_0} & \frac{1}{d_0}+\frac{1}{d_1} & -\frac{1}{d_1} & \cdots & 0 & 0\\
0 & -\frac{1}{d_1} & \frac{1}{d_1}+\frac{1}{d_2} & \cdots & 0 & 0\\
\vdots & \vdots & \vdots & \ddots & \vdots & \vdots\\
0 & 0 & 0 & \cdots & \frac{1}{d_{N-3}}+\frac{1}{d_{N-2}} & -\frac{1}{d_{N-2}}\\
-\frac{1}{d_{N-1}} e^{\I Nk} & 0 & 0 & \cdots & -\frac{1}{d_{N-2}} & \frac{1}{d_{N-2}}+\frac{1}{d_{N-1}}
\end{pmatrix}.
\]
The dispersion curves of the infinite system are the solutions $(k,\omega)$ to the dispersion relation
\begin{equation}
\label{dispersion relation}
    \det(\widetilde{K}(k)-\omega^2I_N)=0.
\end{equation}
Some exemplar dispersion curves were shown in Figure~\ref{fig_scheme}. As the unit cell gets increasingly complex and contains an increasingly large number of points, the pattern of spectral bands and gaps will become increasingly intricate. However, some features can be predicted immediately, such as the fact there will always be a high-frequency band gap (\emph{i.e.} there will be some $\omega^*$ such that all $\omega>\omega^*$ are in a band gap) \cite{hori1964structureI,hori1964structureII,dunckley2024hierarchical}.

\subsection{Finite-sized array}

In the case of a finite-sized array, we suppose that the masses at either end are fixed such that \(u_0\equiv 0\) and \(u_N\equiv 0\) (\emph{i.e.} ``Dirichlet" boundary conditions). Therefore, the system of equations in (\ref{eq:DiffEquation}) is reduced to the $N-1$ unknown displacements of the masses with indices $j=1,..., N-1$ and it can be written in matrix form as 
\begin{equation}
    K\boldsymbol{v}=\omega^2\boldsymbol{v},
    \label{eq:GeneralEigenvalueProblem}
\end{equation}
where \(\boldsymbol{v}=(v_1,v_2,\dots, v_{N-1})^T\) are the harmonic displacement amplitudes $u_j(t)=v_j e^{-\I\omega t}$ and
%
\begin{equation}
\label{single unit cell}
    K=
\begin{pmatrix}
\frac{1}{d_0}+\frac{1}{d_1} & -\frac{1}{d_1} & 0 & \cdots & 0 & 0 \\
-\frac{1}{d_1} & \frac{1}{d_1}+\frac{1}{d_2} & -\frac{1}{d_2} & \cdots & 0 & 0 \\
0 & -\frac{1}{d_2} & \frac{1}{d_2}+\frac{1}{d_3} & \cdots & 0 & 0 \\
\vdots & \vdots & \vdots & \ddots & \vdots & \vdots \\
0 & 0 & 0 & \cdots & \frac{1}{d_{N-3}}+\frac{1}{d_{N-2}} & -\frac{1}{d_{N-2}} \\
0 & 0 & 0 & \cdots & -\frac{1}{d_{N-2}} & \frac{1}{d_{N-2}}+\frac{1}{d_{N-1}}
\end{pmatrix}.
\end{equation}
Hence, \(\omega^2\) is an eigenvalue of \(K\) with eigenvector \(\boldsymbol{v}\). Since we have a linear ordinary differential equation with real coefficients, the general solution is
\[\boldsymbol{u}(t)=\sum_{\omega^2\in \sigma(K)} \boldsymbol{v}_\omega (c_\omega e^{\I\omega t}+\overline{c_\omega} e^{-\I\omega t}),\]
where \(\boldsymbol{v}_\omega\) is an \(\omega^2\)-eigenvector, and \(c_\omega\) is a complex constant.

We will also consider finite-sized arrays formed by repeating a $C_N^\theta$ unit cell several times. Consider some fixed values for $N\in\mathbb{N}$ and $\theta\in(0,1)$ and consider a system formed by repeating $n$ times the unit cell $C_N^\theta$. The eigenstates of this system are solutions to the eigenvalue problem
\begin{equation} \label{multiple unit cell}
K_n \boldsymbol{v}=\omega^2 \boldsymbol{v},
\end{equation}

where $K_n$ is the $(nN-1)\times(nN-1)$ square matrix given by
\begin{equation*}
K_n=
\begin{pmatrix}
    K & -\frac{1}{d_{N-1}}\\
    -\frac{1}{d_{N-1}} & \frac{1}{d_{N-1}}+\frac{1}{d_0} & -\frac{1}{d_0}\\
    & -\frac{1}{d_0} & K & -\frac{1}{d_{N-1}}\\
    & & -\frac{1}{d_{N-1}} & \frac{1}{d_{N-1}}+\frac{1}{d_0} & -\frac{1}{d_0}\\
    & & & -\frac{1}{d_0} & K & \ddots\\
    & & & & \ddots & \ddots & -\frac{1}{d_{N-1}}\\
    & & & & & -\frac{1}{d_{N-1}} & \frac{1}{d_{N-1}}+\frac{1}{d_0} & -\frac{1}{d_0}\\
    & & & & & & -\frac{1}{d_0} & K\\
\end{pmatrix}.
\end{equation*}
Here, \(K=K_1\) is the $(N-1)\times(N-1)$ matrix for single unit cell system given in (\ref{single unit cell}).

In summary, there are three different systems whose spectra we will consider in this work. First, a finite-sized system with the unit cell based on $C_N^\theta$ repeated $n$ times. The spectrum of this system are the eigenvalues of $K_n$ from \eqref{multiple unit cell}. We will study the edge states in this system for the case of fixed (Dirichlet) boundary conditions. Second, we will consider the same system with just one $C_N^\theta$ unit cell. In this case, the spectrum is the eigenvalues of $K$, as defined in \eqref{single unit cell}. Note that, trivially, $K=K_1$. However, we will see below that the restriction to a single unit cell is not just a simplification but that the eigenvalues of the single unit cell have fundamental implications for edge states in larger systems. Finally, to better understand and characterise the edge states we will compare the spectra of $K_n$ and $K$ to the spectrum of the infinitely periodic system, whose continuous spectral bands are given by the eigenvalues of $\widetilde{K}(k)$ as $k$ varies.

\section{Existence of localised modes}
\label{sec:ExistenceOfLocalisedModes}

In this section, we will develop theory to characterise the existence of solutions to the finite-sized array problem whose eigenstates are localised at one edge of the finite cluster. Our focus will be on the finite-sized system with $n$ unit cells (characterised by the eigenvalues of the matrix $K_n$). However, we will need to study the eigenvalues of $K$ (corresponding to a single unit cell) and $\widetilde{K}(k)$ (corresponding to a periodic system) in order to properly understand the edge states. Statements in this section and in subsection \ref{subsec:RelationToTheBulkSpectra} are general to any kind of mass-spring system, and do not depend on the procedure followed to create the geometry of the system.

\subsection{General properties of finite-sized systems} \label{gen:finite}

\begin{proposition}
\label{simple eval}
    The matrix \(K_n\) has \(nN-1\) distinct eigenvalues.
\end{proposition}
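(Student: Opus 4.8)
The plan is to recognise that $K_n$ is a real symmetric tridiagonal (Jacobi) matrix whose sub- and super-diagonal entries are all nonzero, and then to invoke the classical fact that such matrices have simple spectrum. First I would verify this structural claim: reading off the block form of $K_n$, the matrix $K$ is itself tridiagonal with off-diagonal entries $-1/d_1,\dots,-1/d_{N-2}$, while the single rows inserted between consecutive copies of $K$ contribute the off-diagonal entries $-1/d_{N-1}$ and $-1/d_0$. Hence $K_n$ is tridiagonal of size $n(N-1)+(n-1)=nN-1$, it is symmetric, and every off-diagonal entry has the form $-1/d_j$ for some $j$, which is nonzero because $0<d_j<\infty$ (equivalently, all spring constants are well defined and nonzero).

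Next, since $K_n$ is symmetric, algebraic and geometric multiplicities coincide for every eigenvalue, so it suffices to show that each eigenspace is one-dimensional. Let $\lambda\in\sigma(K_n)$ with eigenvector $\boldsymbol{v}=(v_1,\dots,v_{nN-1})^T$. Writing out $(K_n-\lambda I)\boldsymbol{v}=0$ row by row gives a three-term recurrence: the $\ell$-th equation expresses a nonzero multiple of $v_{\ell+1}$ in terms of $v_\ell$ and $v_{\ell-1}$. Because every off-diagonal entry is nonzero, this recurrence determines $v_{\ell+1}$ uniquely from $v_{\ell-1}$ and $v_\ell$; starting from $v_1$ together with the convention $v_0=0$ supplied by the first row, the entire vector $\boldsymbol{v}$ is determined by the single scalar $v_1$. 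Therefore each eigenspace has dimension at most one. Moreover, if $v_1=0$ then the first equation forces $v_2=0$, and inductively $\boldsymbol{v}=0$, a contradiction; hence every eigenspace has dimension exactly one.

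Since every eigenvalue of $K_n$ is simple and $K_n$ has $nN-1$ eigenvalues counted with multiplicity, it has $nN-1$ distinct eigenvalues. I do not anticipate a genuine obstacle here: the only point requiring care is confirming that $K_n$ really is tridiagonal with nowhere-vanishing off-diagonal entries, which reduces to the standing assumption that all neighbouring distances $d_j$ are positive and finite.
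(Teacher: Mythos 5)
Your proof is correct and follows essentially the same route as the paper: both arguments rest on the observation that $K_n$ is a real symmetric tridiagonal matrix with nonzero off-diagonal entries, so every eigenvalue has geometric (hence, by symmetry, algebraic) multiplicity one. You simply unpack the standard nullity-at-most-one fact via the three-term recurrence, which the paper takes as known.
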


\begin{proof}
    Recall that tridiagonal matrices with nonzero super- (or sub-) diagonal have nullity either \(0\) or \(1\). Then, the eigenvalues for \(K_n\) have geometric multiplicity \(1\). Since \(K_n\) is real symmetric, \(K_n\) is diagonalisable. Thus, eigenvalues for \(K_n\) are real with algebraic multiplicity \(1\). Hence we conclude that \(K_n\) has \(nN-1\) distinct eigenvalues.
\end{proof}

Suppose that \(\omega^2\) is an eigenvalue for the finite-sized system with \(n\) unit cells, \emph{i.e.} an eigenvalue of \(K_n\). We call the unique (up to sign) normalised eigenvector 
\begin{equation}
    \boldsymbol{v}^n_\omega=(v^n_1, \dots, v^n_{nN-1})^T,
\end{equation}
for \(\omega^2\) the \textit{mode} for frequency \(\omega^2\). If $\boldsymbol{v}^n_\omega$ is normalised such that $\sup_{1\leq j\leq nN-1} |v^n_j|=1$, then we say the mode is \textit{localised} at an edge of the system if \(v^n_1 \to 0\) or \(v^n_{nN-1} \to 0\) as \(n \to \infty\). Note that for the notion of localisation to be well-defined, we need to ensure that \(\omega^2\) is an eigenvalue of \(K_n\) for all \(n \in \mathbb{N}\). With this in mind, we make the following definition:

\begin{definition}
\label{fixed_eval_def}
    We say an eigenvalue \(\omega^2\) for a finite \((N,\theta)\) system is \textit{universal} if \(\omega^2\) is an eigenvalue of \(K_n\) for all \(n \in \mathbb{N}\).
\end{definition}

\begin{proposition}
\label{fixed K-modes}
    If \(\omega^2\) is an eigenvalue for the single unit cell system, then \(\omega^2\) is universal.
\end{proposition}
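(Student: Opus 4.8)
The plan is to produce, for every $n\in\mathbb{N}$, an explicit eigenvector of $K_n$ at the same frequency by tiling copies of the single-cell eigenvector; the cleanest way to organise the bookkeeping is through the transfer matrix of the recurrence \eqref{infinite_sys_diff}. First I would record the (purely mechanical) fact that a vector $\boldsymbol{v}=(v_1,\dots,v_{nN-1})^T$ solves $K_n\boldsymbol{v}=\omega^2\boldsymbol{v}$ if and only if the extended sequence $(v_0,v_1,\dots,v_{nN})$ with $v_0=v_{nN}=0$ obeys the three-term recurrence $v_{j+1}=d_j\big(\tfrac1{d_j}+\tfrac1{d_{j-1}}-\omega^2\big)v_j-\tfrac{d_j}{d_{j-1}}v_{j-1}$ for $j=1,\dots,nN-1$, with the $N$-periodic coefficients $d_i=d_{i\bmod N}$; this is just a row-by-row reading of \eqref{single unit cell} and \eqref{multiple unit cell}, the only mildly fiddly check being that the junction rows of $K_n$ (built from $d_{N-1}$ and $d_0$) are exactly the recurrence at indices $j\equiv0\pmod N$. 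The same statement with $n=1$ applies to $K$; since launching the recurrence from $v_0=v_1=0$ yields the zero sequence, any eigenvector $\boldsymbol{v}$ of $K$ must have $v_1\neq0$.

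Next, write $\begin{pmatrix}v_j\\v_{j+1}\end{pmatrix}=T_j(\omega^2)\begin{pmatrix}v_{j-1}\\v_j\end{pmatrix}$ with $T_j(\omega^2)=\begin{pmatrix}0&1\\-d_j/d_{j-1}&d_j(1/d_j+1/d_{j-1}-\omega^2)\end{pmatrix}$ and set $P(\omega^2)=T_NT_{N-1}\cdots T_1$; periodicity of the $d_j$ gives $\begin{pmatrix}v_{mN}\\v_{mN+1}\end{pmatrix}=P^m\begin{pmatrix}v_0\\v_1\end{pmatrix}$ for every $m\geq0$, while $\det T_j=d_j/d_{j-1}$ telescopes to $\det P=d_N/d_0=1$. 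Feeding in $(v_0,v_1)=(0,v_1)$ with $v_1\neq0$, the single-cell condition $v_N=0$ is precisely $P_{12}(\omega^2)=0$; together with $\det P=1$ this forces $P=\begin{pmatrix}a&0\\c&a^{-1}\end{pmatrix}$, so $P^n$ is again lower triangular and $(P^n)_{12}=0$ for all $n$. Hence the $N$-periodic recurrence launched from $(0,v_1)$ also satisfies $v_{nN}=(P^n)_{12}\,v_1=0$, and combined with $v_0=0$ and $v_1\neq0$ the first step identifies the truncation $(v_1,\dots,v_{nN-1})$ as a nonzero eigenvector of $K_n$ for $\omega^2$. As $n$ was arbitrary, $\omega^2$ is universal.

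I expect the only genuine obstacle to be the indexing in the first step: making sure the block matrix $K_n$, with its alternating $K$-blocks and coupling rows, really encodes the $N$-periodic recurrence with the correct offsets at each junction. Once that dictionary is in place, the determinant telescoping, the nonvanishing of $v_1$, and ``powers of a lower-triangular matrix stay lower triangular'' are all immediate. A transfer-matrix-free variant is also available and may be preferable if one wants to avoid introducing $P$: define $\boldsymbol{w}$ on the $k$-th cell by $w_{kN+j}=c_kv_j$ for $j=1,\dots,N-1$, with $w_{kN}=0$ at the junctions and $c_k=\big(-d_0v_{N-1}/(d_{N-1}v_1)\big)^k$, and verify $K_n\boldsymbol{w}=\omega^2\boldsymbol{w}$ directly, row by row; here the nonvanishing of both $v_1$ and $v_{N-1}$ (each forced by the recurrence, exactly as for $v_1$ above) is what makes the construction well defined.
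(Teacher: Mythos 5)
Your proof is correct, but it takes a genuinely different route from the paper. The paper's argument is purely algebraic: expanding $\det(xI-K_n)$ along the junction row yields a recursion of the form $c_{K_n}(x)=\bigl(x-\tfrac{1}{d_0}-\tfrac{1}{d_{N-1}}\bigr)c_K(x)c_{K_{n-1}}(x)-\tfrac{1}{d_{N-1}^2}p(x)c_{K_{n-1}}(x)-\tfrac{1}{d_0^2}c_K(x)q_{n-1}(x)$, from which $c_K(x)\mid c_{K_n}(x)$ follows by induction on $n$; no eigenvector is ever exhibited. You instead encode the eigenvalue equation as the three-term recurrence with Dirichlet data $v_0=v_{nN}=0$, observe that $\omega^2\in\sigma(K)$ forces the period-$N$ monodromy matrix $P$ to satisfy $P_{12}=0$ with $\det P=1$, and conclude that $(P^n)_{12}=0$ for all $n$, which hands you an explicit nonzero eigenvector of $K_n$. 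All the supporting steps check out: the dictionary between the rows of $K_n$ (including the junction rows) and the $N$-periodic recurrence, the telescoping $\det P=d_N/d_0=1$, and the nonvanishing of $v_1$ (and $v_{N-1}$ in your variant) via backward uniqueness of the recurrence. What your route buys is constructiveness and foresight: the triangularity of $P$ is exactly the statement, proved later in the paper as part of Theorem~\ref{Theorem:alphaGap}, that $T(\omega)$ has $(v_1^\omega,0)^T$ as an eigenvector with eigenvalue $\alpha(\omega)$, and your diagonal entry $a=P_{11}$ is (up to the flip of coordinates) that localisation factor; your second, transfer-matrix-free variant with scaling constants $c_k=\alpha(\omega)^k$ is essentially the tiled ansatz the paper uses to prove Theorem~\ref{thm:alpha1}. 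So your proof unifies Proposition~\ref{fixed K-modes} with the later localisation analysis, at the cost of more setup than the paper's short divisibility argument, which in turn gives the slightly stronger purely polynomial fact $c_K\mid c_{K_n}$ without reference to eigenvectors.
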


\begin{proof}
    Fix \(n \in \mathbb{N}\). It suffices to show that \(c_K(x) \mid c_{K_n}(x)\) where \(c(x)\) denotes the characteristic polynomial. From the Laplace expansion of the determinant, we have that \[c_{K_n}(x)=\left(x-\left(\frac{1}{d_0}+\frac{1}{d_{N-1}}\right)\right)c_K(x)c_{K_{n-1}}(x)-\frac{1}{d_{N-1}^2}p(x)c_{K_{n-1}}(x)-\frac{1}{d_0^2}c_K(x)q_{n-1}(x)\]
    for some \(p(x), q_{n-1}(x) \in \mathbb{R}[x]\). Hence, by induction, \(c_K(x) \mid c_{K_n}(x)\).
\end{proof}

For any eigenvalue \(\omega^2\) of the single unit cell system in (\ref{single unit cell}), we denote \[\boldsymbol{v}_\omega=(v^\omega_1, \dots, v^\omega_{N-1})^T\] as the unique \(\omega^2\)-mode. Let $\alpha(\omega)$ be the \textit{localisation factor} defined for the $\omega^2$-mode as  
\begin{equation}
    \alpha(\omega)=-\frac{d_0}{d_{N-1}}\frac{v^\omega_{N-1}}{v^\omega_1},
\end{equation}
then we have the following theorem, which shows how localisation at either edge can be characterised by the value of $\alpha(\omega)$.

\begin{theorem} \label{thm:alpha1}
    Fix \((N,\theta) \in \mathbb{N}_{\geq 2} \times (0,1)\). For \(\omega^2 \in \sigma(K)\) and \(n \in \mathbb{N}\), the \(\omega^2\)-mode for \(K_n\) (the finite \((N,\theta)\) system with \(n\) unit cells) is localised if and only if \(\lvert \alpha(\omega) \rvert \neq 1\).
\end{theorem}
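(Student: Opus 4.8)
The plan is to exploit the tiled block structure of $K_n$ to write down the $\omega^2$-mode of $K_n$ in closed form in terms of the single-cell mode $\boldsymbol{v}_\omega$, and then read off the decay directly. Since $\omega^2 \in \sigma(K)$, Proposition~\ref{fixed K-modes} guarantees $\omega^2 \in \sigma(K_n)$ for every $n$, so the object in the statement is well defined; and by Proposition~\ref{simple eval} the corresponding eigenspace is one-dimensional, so it suffices to produce a single nonzero $\omega^2$-eigenvector of $K_n$ and check its localisation.

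First I would extend the single-cell mode by setting $v_0^\omega = v_N^\omega = 0$; with this convention $K\boldsymbol{v}_\omega = \omega^2 \boldsymbol{v}_\omega$ is exactly the three-term recurrence $\tfrac1{d_l}v_{l+1}^\omega - (\tfrac1{d_l}+\tfrac1{d_{l-1}})v_l^\omega + \tfrac1{d_{l-1}}v_{l-1}^\omega = -\omega^2 v_l^\omega$ for $l = 1,\dots,N-1$. A quick forward induction on this recurrence (using that the off-diagonal entries $-1/d_l$ are nonzero) shows $v_1^\omega \neq 0$, so $\alpha(\omega)$ is well defined. Then I would define a vector $\boldsymbol{v}^n$ on the $n$-cell lattice by tiling: on the $m$-th cell its entries are $\alpha(\omega)^{m-1}$ times those of $\boldsymbol{v}_\omega$, i.e.\ $v^n_{(m-1)N + l} = \alpha(\omega)^{m-1} v_l^\omega$ for $l = 0,\dots,N$ and $m = 1,\dots,n$ (consistently, the junction masses $mN$ receive the value $\alpha(\omega)^{m-1}v_N^\omega = 0$, and the far boundary receives $\alpha(\omega)^{n-1}v_N^\omega = 0$). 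One then checks $K_n\boldsymbol{v}^n = \omega^2 \boldsymbol{v}^n$ row by row: every row lying in the interior of a cell collapses, after cancelling the common factor $\alpha(\omega)^{m-1}$, to a row of $K\boldsymbol{v}_\omega = \omega^2\boldsymbol{v}_\omega$; the only genuinely new equations are those at the $n-1$ junction masses, and the equation at mass $mN$, after using that the junction entries vanish, reduces precisely to $\tfrac1{d_0}\alpha(\omega)v_1^\omega + \tfrac1{d_{N-1}}v_{N-1}^\omega = 0$, which is nothing but the defining identity for $\alpha(\omega)$. Hence $\boldsymbol{v}^n$ (nonzero, since its first cell equals $\boldsymbol{v}_\omega \neq 0$) is the $\omega^2$-mode up to normalisation.

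With this formula in hand the conclusion is bookkeeping. Put $V = \max_{1\le l\le N-1}|v_l^\omega| > 0$. The entries of $\boldsymbol{v}^n$ are the numbers $\alpha(\omega)^{m-1}v_l^\omega$ together with zeros, so $\|\boldsymbol{v}^n\|_\infty = V\max_{1\le m\le n}|\alpha(\omega)|^{m-1}$, which equals $V$ if $|\alpha(\omega)|\le 1$ and $V|\alpha(\omega)|^{n-1}$ if $|\alpha(\omega)| > 1$. Normalising to unit sup-norm, the left-endpoint entry is a fixed nonzero multiple of $v_1^\omega$ when $|\alpha(\omega)|\le 1$ and is $O(|\alpha(\omega)|^{-(n-1)})\to 0$ when $|\alpha(\omega)| > 1$; the right-endpoint entry is $\alpha(\omega)^{n-1}v_{N-1}^\omega$ divided by the norm, whose modulus is $|\alpha(\omega)|^{n-1}|v_{N-1}^\omega|/V\to 0$ when $|\alpha(\omega)| < 1$ and is the fixed constant $|v_{N-1}^\omega|/V \neq 0$ when $|\alpha(\omega)| = 1$ (here $v_{N-1}^\omega\neq 0$ because $|\alpha(\omega)| = 1$ forces $\alpha(\omega)\neq 0$). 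So the mode is localised at the left edge when $|\alpha(\omega)| < 1$, at the right edge when $|\alpha(\omega)| > 1$, and at neither edge when $|\alpha(\omega)| = 1$, which is the claimed equivalence.

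The main obstacle --- really the only non-mechanical step --- is spotting and verifying the tiling ansatz, and in particular that the junction equations hold; this is where the precise algebraic form of the localisation factor (the prefactor $-d_0/d_{N-1}$ times the ratio $v_{N-1}^\omega/v_1^\omega$) is forced rather than postulated, being exactly the condition that consecutive cells are scalar multiples of one another. It is worth noting in passing that $\boldsymbol{v}^n$ satisfies $v^n_{j+N} = \alpha(\omega)\,v^n_j$, so $\alpha(\omega)$ is precisely the Floquet multiplier $e^{\I N k}$ of the periodic problem at this $\omega$, and $|\alpha(\omega)| \neq 1$ is the statement that $\omega$ lies in a spectral gap of $\widetilde K(k)$. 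The only minor case to keep track of is $\alpha(\omega) = 0$ (equivalently $v_{N-1}^\omega = 0$), where the mode is supported entirely on the first cell and is trivially localised, consistent with the $|\alpha(\omega)| < 1$ branch.
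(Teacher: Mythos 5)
Your proposal is correct and follows essentially the same route as the paper's own proof: the tiling ansatz with zeros at the junction masses, the observation that the junction equations force consecutive cells to differ by the factor $\alpha(\omega)$, and the sup-norm bookkeeping to read off localisation. The extra details you supply (that $v_1^\omega\neq 0$ so $\alpha$ is well defined, and the degenerate case $\alpha(\omega)=0$) are worthwhile refinements but do not change the argument.
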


\begin{proof}
    Fix \(n \in \mathbb{N}\). By Proposition~\ref{fixed K-modes}, we have \(\omega^2 \in \sigma(K_n)\). As written in (\ref{multiple unit cell}), \(K_n\) is almost block diagonal if there are no ``cross"-like entries between the \(K\) blocks. Hence, it is natural to assume an ansatz for an eigenvector of the form
    \[\boldsymbol{v}=(v_{1,1}, \dots, v_{1,N-1},0,v_{2,1}, \dots, v_{2,N-1},0,\dots, v_{n,1}, \dots, v_{n,N-1})^T,\]
    where \(\boldsymbol{v}_j=(v_{j,1}, \dots, v_{j,N-1})^T\) is an \(\omega^2\)-eigenvector of \(K\) for \(j \in \{1, \dots, n\}\). By Proposition \ref{simple eval}, we can write
    \[\boldsymbol{v}_j=k_j \boldsymbol{v}_\omega,\]
    for \(k_j \in \mathbb{R}\), where \(j \in \{1, \dots, n\}\). For \(\boldsymbol{v}\) to be an eigenvector, we need
    \[-\frac{1}{d_{N-1}}v_{j,N-1}-\frac{1}{d_0}v_{j+1,1}=0,\]
    for all \(j \in \{1, \dots, n-1\}\).
    That is
    \[\frac{k_{j+1}}{k_j}=\frac{v_{j+1,1}}{v_{j,1}}=-\frac{d_0}{d_{N-1}}\frac{v_{j,N-1}}{v_{j,1}}=-\frac{d_0}{d_{N-1}}\frac{v^\omega_{N-1}}{v^\omega_1}=\alpha(\omega).\]
    Then,
    \[k_j=\alpha(\omega)^{j-1}k_1,\]
    for \(j \in \{1, \dots, n\}\). Since
    \[\lVert \boldsymbol{v} \rVert=\max_{0\leq j \leq n-1}\lvert\alpha(\omega)\rvert^j \, \lVert k_1\boldsymbol{v}_\omega \rVert=\max\{1,\lvert\alpha(\omega)\rvert^{n-1}\} \, \lvert k_1 \rvert,\]
    for \(\boldsymbol{v}\) to be the \(\omega^2\)-mode, we need
    \[\lvert k_1 \rvert=\frac{1}{\max\{1,\lvert\alpha(\omega)\rvert^{n-1}\}}.\]
    By the uniqueness (up to sign) of \(\omega^2\)-mode,
    \[\boldsymbol{v}^n_\omega=(k_1\boldsymbol{v}_\omega \, , \, 0 \, , \, \alpha(\omega)k_1\boldsymbol{v}_\omega \, , \, 0 \, , \,  \dots \, , \, \alpha(\omega)^{n-1}k_1\boldsymbol{v}_\omega)^T.\]
    Hence the \(\omega^2\)-mode is localised if and only if \(\lvert \alpha(\omega) \rvert \neq 1\). Moreover, if \(\lvert \alpha(\omega) \rvert<1\), then \(v^n_{nN-1} \to 0\) as \(n \to \infty\); if \(\lvert \alpha(\omega) \rvert>1\), then \(v^n_1 \to 0\) as \(n \to \infty\).
\end{proof}

\begin{remark}
    From the proof of Theorem~\ref{thm:alpha1} we can see that $\alpha(\omega)$ not only tells us whether an eigenmode is localised, but also indicates where it is localised. The eigenmode is localised at the $j=0$ edge of the finite-sized array if $|\alpha(\omega)|<1$ and it is localised at the $j=nN-1$ end if $|\alpha(\omega)|>1$.
\end{remark}

\subsection{Relation to Floquet-Bloch spectra}
\label{subsec:RelationToTheBulkSpectra}

To better understand the localisation of eigenstates, it is informative to compare the spectra of the finite-sized array with the Floquet-Bloch spectra of the infinitely periodic material. Consider the dispersion relation from (\ref{dispersion relation}):
\[\det(\omega^2I_N-\widetilde{K}(k))=0.\]
Using the fact that $\det(\widetilde{K}(k))=1$ for all $k$, this simplifies to
\begin{equation} \label{eq:dispersion_cosine}
\cos{Nk}=\frac{1}{2}(-1)^N \frac{c_{K_2}(\omega^2)}{c_K(\omega^2)}\prod_{j=0}^{N-1}d_j.
\end{equation}
Note that if \(c_K(\omega^2)=0\), then the ratio \(c_{K_2}(\omega^2)/c_K(\omega^2)\) is interpreted as \(\lim_{t \to \omega^2} c_{K_2}(t)/c_K(t)\). Clearly,  \eqref{eq:dispersion_cosine} has a real-valued solution for $\omega$ if and only if the right-hand side is less than one in absolute value. Hence, \(\omega^2 \in \mathbb{R}\) is in a \emph{band gap} (and there exists no Floquet solution) if and only if 
\begin{equation}
\frac{1}{2} \left\lvert\frac{c_{K_2}(\omega^2)}{c_K(\omega^2)}\right\rvert \prod_{j=0}^{N-1}d_j>1.
\end{equation}
This observation informs the following proposition, which says that if an eigenvalue of the finite-sized system with $n$ unit cells falls in a band gap of the periodic system, then that eigenvalue must be an eigenvalue of the finite-sized system with just one unit cell. For simplicity, we prove it for the special case where $n$ is a power of 2, although we conjecture that it holds in general.

\begin{proposition}
    Fix \(n=2^q\) for some \(q \in \mathbb{N}\).     
    If \(\omega^2 \in \sigma(K_n)\) and $\omega^2\notin\cup_k\sigma(\widetilde{K}(k))$, then \(\omega^2 \in \sigma(K)\).
    \label{prop:Proposition33}
\end{proposition}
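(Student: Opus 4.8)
The plan is to encode all three spectra — those of $K_n$, of $K$, and of the Floquet--Bloch operator $\widetilde K(k)$ — through a single $2\times 2$ \emph{monodromy matrix}, and then reduce the claim to an elementary identity for its powers. First I would rewrite the three-term recursion \eqref{infinite_sys_diff} as the first-order system $\bigl(v_{j+1},v_j\bigr)^{T}=T_j(\omega^2)\bigl(v_j,v_{j-1}\bigr)^{T}$, where $T_j$ has first row $\bigl(d_j(\tfrac1{d_j}+\tfrac1{d_{j-1}}-\omega^2),\,-\tfrac{d_j}{d_{j-1}}\bigr)$ and second row $(1,0)$, so that $\det T_j=d_j/d_{j-1}$. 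Since the distances are $N$-periodic with $d_N=d_0$, we have $T_{j+N}=T_j$, and the one-cell monodromy $M(\omega^2):=T_N T_{N-1}\cdots T_1$ satisfies $\det M(\omega^2)=\prod_{j=1}^{N}d_j/d_{j-1}=1$.

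The next step is to build the dictionary. Propagating from $(v_1,v_0)=(1,0)$ gives $\bigl(v_{mN+1},v_{mN}\bigr)^{T}=M(\omega^2)^m(1,0)^{T}$, hence $v_{mN}=[M(\omega^2)^m]_{21}$; and any $\omega^2$-eigenvector of $K_m$, extended by $v_0=v_{mN}=0$, must have $v_1\neq 0$ (otherwise the recursion forces it to vanish identically). Therefore $\omega^2\in\sigma(K_m)\iff[M(\omega^2)^m]_{21}=0$, and in particular $\omega^2\in\sigma(K)\iff[M(\omega^2)]_{21}=0$. On the periodic side, a Floquet solution with real exponent $k$ exists exactly when $M(\omega^2)$ has an eigenvalue $e^{\I Nk}$ on the unit circle; since $\det M=1$ its eigenvalues are $\lambda,\lambda^{-1}$, so this occurs iff $\lvert\operatorname{tr}M(\omega^2)\rvert\le 2$. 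Thus the band-gap hypothesis $\omega^2\notin\cup_k\sigma(\widetilde K(k))$ is equivalent to $\lvert\operatorname{tr}M(\omega^2)\rvert>2$ — the same statement as the gap condition following from \eqref{eq:dispersion_cosine}.

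Now the core computation. Write $t:=\operatorname{tr}M(\omega^2)$, so $|t|>2$ and $M$ has two distinct real eigenvalues $\mu,\mu^{-1}$ with $|\mu|>1$. The Cayley--Hamilton relation $M^2=tM-I$ gives, by induction, $M^m=c_m M-c_{m-1}I$ with $c_m=\dfrac{\mu^m-\mu^{-m}}{\mu-\mu^{-1}}$ (the Chebyshev polynomial $U_{m-1}(t/2)$). Reading off the $(2,1)$ entry, $[M^m]_{21}=c_m\,[M]_{21}$. Since $|\mu|>1$ we have $\mu^{2m}\neq 1$, so $c_m\neq 0$; hence $[M(\omega^2)^n]_{21}=0$ forces $[M(\omega^2)]_{21}=0$, i.e. $\omega^2\in\sigma(K)$. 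Note this argument never used $n=2^q$, so it would actually prove the conjectured general form of the proposition.

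The substantive part is the dictionary in the second paragraph — reconciling the index conventions of $K$, $K_n$, $\widetilde K(k)$ with the recursion, the non-vanishing of $v_1$, and $\det M=1$ (which is what makes the trace criterion coincide with the band-gap criterion); everything after that is a two-line computation. The authors' restriction to $n=2^q$ is likely an artefact of a different route — iterating $[M^{2m}]_{21}=[M^m]_{21}\operatorname{tr}(M^m)$ together with $\operatorname{tr}(M^{2m})=\operatorname{tr}(M^m)^2-2$ and the observation that $|t|>2\Rightarrow|t^2-2|>2$ — which sidesteps Chebyshev polynomials but only doubles the number of cells at each stage.
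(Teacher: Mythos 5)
Your proof is correct, and it takes a genuinely different route from the paper's. The paper works with characteristic polynomials: it defines the submatrices $K_n^-$ and $K_{n-}$, derives the factorisation $c_{K_{2n}}(x)=c_{K_n}(x)\cdot F_n(x)$ by Laplace expansion, identifies $F_n$ (up to a constant) with the $n$-cell dispersion function $\cos(nNk)$, concludes $F_n(\omega^2)\neq 0$ in a gap, and then descends by induction $2^q\to 2^{q-1}\to\cdots\to 1$ — which is exactly why the hypothesis $n=2^q$ appears: the doubling identity only links $K_{2n}$ to $K_n$. You instead package everything into the one-cell monodromy $M(\omega^2)$ (the paper's own $T(\omega)$, introduced just before Theorem \ref{Theorem:alphaGap}), use the dictionary $\omega^2\in\sigma(K_m)\Leftrightarrow [M(\omega^2)^m]_{21}=0$ together with $\det M=1$ and the gap criterion $\lvert\operatorname{tr}M\rvert>2$, and then the Cayley--Hamilton/Chebyshev identity $[M^m]_{21}=U_{m-1}(t/2)\,[M]_{21}$ with $U_{m-1}(t/2)=(\mu^m-\mu^{-m})/(\mu-\mu^{-1})\neq 0$ for real $\lvert\mu\rvert>1$. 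All the steps check out: the Dirichlet eigenvectors of $K_m$ do correspond to recursion solutions with $v_0=v_{mN}=0$ and necessarily $v_1\neq0$, and the trace criterion is equivalent to the paper's band-gap condition derived from \eqref{eq:dispersion_cosine}. What your approach buys is substantial: the identity relates $K_m$ directly to $K_1$ for \emph{every} $m$, so you prove the proposition for arbitrary $n$ — i.e.\ you settle the general case that the paper explicitly leaves as a conjecture — and your closing diagnosis of why the authors' doubling argument forces $n=2^q$ is accurate. (The two proofs are ultimately two faces of the same structure, since $[M^m]_{21}$ equals $c_{K_m}(\omega^2)$ up to a product of the $d_j$, but the monodromy formulation makes the divisibility transparent for all $m$ rather than only under doubling.)
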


\begin{proof}
    Define \(K_n^-\) and \(K_{n-}\) as submatrices of \(K_n\) by deleting the first row and column and the last row and column, respectively. We have
    \[c_{K_{2n}}(x)=c_{K_n}(x)\left(\left(x-\left(\frac{1}{d_0}+\frac{1}{d_{N-1}}\right)\right)c_{K_n}(x)-\frac{1}{d_{N-1}^2}c_{K_{n-}}(x)-\frac{1}{d_0^2}c_{K_n^-}(x)\right).\]
    We also have the generalised dispersion relation
    \[\cos{nNk}=\frac{1}{2}(-1)^{nN}\prod_{j=0}^{N-1}d_j^n\left(\left(x-\left(\frac{1}{d_0}+\frac{1}{d_{N-1}}\right)\right)c_{K_n}(x)-\frac{1}{d_{N-1}^2}c_{K_{n-}}(x)-\frac{1}{d_0^2}c_{K_n^-}(x)\right).\]
    Since \(\omega^2\) is in a gap,
    \[\frac{1}{2}\prod_{j=0}^{N-1}d_j^n\left\lvert\left(\omega^2-\left(\frac{1}{d_0}+\frac{1}{d_{N-1}}\right)\right)c_{K_n}(\omega^2)-\frac{1}{d_{N-1}^2}c_{K_{n-}}(\omega^2)-\frac{1}{d_0^2}c_{K_n^-}(\omega^2)\right\rvert>1,\]
    then
    \[\left(\omega^2-\left(\frac{1}{d_0}+\frac{1}{d_{N-1}}\right)\right)c_{K_n}(\omega^2)-\frac{1}{d_{N-1}^2}c_{K_{n-}}(\omega^2)-\frac{1}{d_0^2}c_{K_n^-}(\omega^2) \neq 0,\]
    hence \(c_{K_{2n}}(\omega^2)=0\) implies \(c_{K_n}(\omega^2)=0\). Thus, the statement follows by induction.
\end{proof}



The above statement says that, given a finite-sized system with $n$ repetitions of the unit cell $C_N^\theta$, if an eigenvalue is in a gap of the periodic system, then it must be an eigenvalue of the finite-sized system with just one unit cell. As a result, in our search for localised states in systems with multiple unit cells, it will be sufficient to consider the system with just one unit cell. The natural next step is to investigate under which conditions the eigenvalues from the single unit cell system lie in a band gap of the periodic structure. Before proving the result, we recall the infinite system (\ref{infinite_sys_diff}) which can be written in transfer matrix form as
\begin{equation}
    \begin{pmatrix}
        v_{j+1}\\v_j
    \end{pmatrix}
    =M_j(\omega)
    \begin{pmatrix}
        v_j\\v_{j-1}
    \end{pmatrix},
\end{equation}
where \[M_j(\omega)=
\begin{pmatrix}
    d_j(\frac{1}{d_j}+\frac{1}{d_{j-1}}-\omega^2) & -\frac{d_j}{d_{j-1}}\\
    1 & 0
\end{pmatrix}.
\]
Let \[T(\omega)=\prod_{j=1}^N M_j(\omega),\]
then the characteristic polynomial of the $2\times2$ matrix $T$ is 
\[c_{T(\omega)}(x)=x^2-\text{tr}(T(\omega))x+1.\]
Note that if \((v_j)_{j \in \mathbb{Z}}\) is an \(\omega^2\)-Floquet solution to (\ref{infinite_sys_diff}) with Floquet exponent \(k\), then
\[T(\omega)\begin{pmatrix}
    v_1\\v_0
\end{pmatrix}=\begin{pmatrix}
    v_{N+1}\\v_{N}
\end{pmatrix}=e^{iNk}\begin{pmatrix}
    v_1\\v_0
\end{pmatrix}.\]
That is, \(e^{\pm iNk}\) are eigenvalues of \(T(\omega)\).

\begin{theorem} \label{Theorem:alphaGap}
    Fix \(\omega^2 \in \sigma(K)\), then \(\omega^2\) is in a band gap of the periodic system if and only if \(\lvert \alpha(\omega) \rvert \neq 1\).
\end{theorem}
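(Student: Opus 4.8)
The plan is to recognise $\alpha(\omega)$ as an eigenvalue of the monodromy matrix $T(\omega)$ and then to read off the band-gap condition from the size of $\mathrm{tr}\,T(\omega)$.

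First I would extend the $\omega^2$-mode $\boldsymbol{v}_\omega=(v_1^\omega,\dots,v_{N-1}^\omega)^T$ of $K$ to an $(N+2)$-term sequence by setting $v_0^\omega=v_N^\omega=0$. Reading off the rows of $K\boldsymbol{v}_\omega=\omega^2\boldsymbol{v}_\omega$ shows that this padded sequence solves the three-term recurrence \eqref{infinite_sys_diff} for every $j=1,\dots,N-1$: rows $1$ through $N-2$ reproduce the recurrence in the bulk, and the final row $j=N-1$ is equivalent to the closing relation $v_N^\omega=0$. Along the way I would record that $v_1^\omega\neq0$ and $v_{N-1}^\omega\neq0$, since $v_0^\omega=v_1^\omega=0$ (respectively $v_N^\omega=v_{N-1}^\omega=0$) would propagate through the recurrence and force the whole sequence to vanish, contradicting that $\boldsymbol{v}_\omega$ is an eigenvector. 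As $K$ is real symmetric we may take $\boldsymbol{v}_\omega$ real, so $\alpha(\omega)$ is a well-defined nonzero \emph{real} number.

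Next I would advance the recurrence one more step, to $j=N$. Using the periodicity $d_N=d_0$ and $v_N^\omega=0$ it gives $v_{N+1}^\omega=-\tfrac{d_0}{d_{N-1}}v_{N-1}^\omega=\alpha(\omega)\,v_1^\omega$, which in the transfer-matrix formalism preceding the theorem reads $T(\omega)\,(v_1^\omega,0)^T=(v_{N+1}^\omega,v_N^\omega)^T=\alpha(\omega)\,(v_1^\omega,0)^T$. Hence $\alpha(\omega)$ is an eigenvalue of $T(\omega)$. Since $\det M_j(\omega)=d_j/d_{j-1}$, the product telescopes to $\det T(\omega)=d_N/d_0=1$, so the other eigenvalue of $T(\omega)$ is $\alpha(\omega)^{-1}$ and $\mathrm{tr}\,T(\omega)=\alpha(\omega)+\alpha(\omega)^{-1}$.

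Finally I would assemble the equivalences. By the discussion before the theorem, $\omega^2$ lies in a band gap exactly when \eqref{infinite_sys_diff} has no Floquet solution, that is, when $T(\omega)$ has no eigenvalue of the form $e^{\pm\I Nk}$ with $k\in\mathbb{R}$; for a real $2\times2$ matrix of determinant $1$ this holds if and only if $\lvert\mathrm{tr}\,T(\omega)\rvert>2$. Substituting $\mathrm{tr}\,T(\omega)=\alpha(\omega)+\alpha(\omega)^{-1}$ and using that $\lvert t+t^{-1}\rvert\geq2$ for every nonzero real $t$, with equality precisely for $t=\pm1$, gives $\lvert\mathrm{tr}\,T(\omega)\rvert>2\iff\lvert\alpha(\omega)\rvert\neq1$, which is the assertion. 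I expect the only genuinely delicate step to be the first one --- checking that the zero-padded Dirichlet eigenvector really satisfies the bulk recurrence all the way up to $j=N-1$, so that the extra step to $j=N$ is justified, and keeping the conventions $x_N=N$ and $d_N=d_0$ consistent so that $\det T(\omega)=1$; everything after that is the familiar trace dichotomy for a determinant-one transfer matrix.
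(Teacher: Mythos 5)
Your proposal is correct and follows essentially the same route as the paper: both identify $\alpha(\omega)$ as an eigenvalue of the monodromy matrix $T(\omega)$ via $T(\omega)(v_1^\omega,0)^T=\alpha(\omega)(v_1^\omega,0)^T$, note that $\det T(\omega)=1$ forces the spectrum of $T(\omega)$ to be $\{\alpha(\omega),\alpha(\omega)^{-1}\}$, and conclude that a Floquet solution exists precisely when $\lvert\alpha(\omega)\rvert=1$. Your phrasing of the last step through the trace dichotomy $\lvert\mathrm{tr}\,T(\omega)\rvert>2$ is equivalent to the paper's direct construction of a Floquet solution, and your explicit checks that $v_1^\omega\neq0$ and $\det T(\omega)=1$ are welcome details the paper leaves implicit.
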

\begin{proof}
    Note that \[(v_j)_{j \in \mathbb{Z}} \vcentcolon=\left(\alpha(\omega)^{[\frac{j}{N}]} v^\omega_{j-N[\frac{j}{N}]}\right)_{j \in \mathbb{Z}}=(\dots, \alpha(\omega)^{-1}v^\omega_{N-1},0,v^\omega_1, \dots, v^\omega_{N-1},0,\alpha(\omega)v^\omega_1, \dots)\] is an \(\omega^2\)-solution for the infinite system (\ref{infinite_sys_diff}). Then
    \[T(\omega) \begin{pmatrix}
        v^\omega_1\\0
    \end{pmatrix}=\alpha(\omega)\begin{pmatrix}
        v^\omega_1\\0
    \end{pmatrix}.\]
    Hence, \(T(\omega)\) has eigenvalues $\alpha(\omega)$ and $ \alpha(\omega)^{-1}$. We prove the contrapositive as follows. Suppose \(\lvert \alpha(\omega) \rvert=1\), then \((v_j)_{j \in \mathbb{Z}}\) is an \(\omega^2\)-Floquet solution (with \(k=0\) if \(\alpha(\omega)=1\) and \(k=\frac{\pi}{N}\) if \(\alpha(\omega)=-1\)). Conversely, suppose there exits an \(\omega^2\)-Floquet solution, then eigenvalues of \(T(\omega)\) have modulus \(1\), hence \(\lvert \alpha(\omega) \rvert=1\).
\end{proof}

Figure~\ref{fig_2} shows an example of how the spectra of the finite and infinitely periodic systems are related.  Panel (a) is a phase space diagram that shows how the eigenfrequencies $\omega^2$ vary when the phase parameter $\theta$ is varied for a modulated crystal with fixed $N=4$. Black lines represent the eigensolutions of the Floquet-Bloch spectra (recall that each $\theta$ corresponds to a different geometrical configuration, and thus, a different eigenvalue problem). Red lines represent eigensolutions for a finite problem of a crystal with $n = 7$ unit cells. It can be noticed that some of the eigenvalues from the $K_7$ problem lie in the same frequency range of the propagating bands of the spectrum of the infinite structure, while others lie in the bandgaps of the infinite structure. Those eigenvalues lying in the bandgap of the infinite structure are the eigensolutions of the single unit cell system $K$, as proved by Proposition \ref{prop:Proposition33}. Following Theorem~\ref{Theorem:alphaGap}, we can examine the localisation factor $\alpha(\omega)$ corresponding to one of these eigensolutions of the finite problem that lies in a gap of the infinite structure.  

\begin{figure}[!h]
\centering\includegraphics[width=0.85\linewidth]{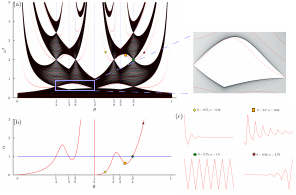}
\caption{Panel (a) is a phase space diagram showing eigenfrequencies $\omega^2$ vs. modulation parameter $\theta$ for the spectrum of the infinitely periodic system (black lines) and for the finite problem with $n=7$ unit cells (red lines). Blue dashed lines represent geometrical conditions for which the infinitely periodic and the finite spectrum share at least one of their eigenvalues. Panel (b) shows the evolution of the localisation factor $\alpha$ as a function of the modulation parameter for one of the eigenstates of the finite-sized problem that lies in the bandgap of the infinite system (the blue line indicates the critical value $|\alpha|=1$). Finally, panel (c) depicts four eigenstates corresponding to different geometrical structures that have been marked with colour markers in panels (a) and (b).}
\label{fig_2}
\end{figure}

Figure~\ref{fig_2}(b) shows how the localisation factor $\alpha$ corresponding to one of the eigenmodes of the finite system varies as a function of the modulation parameter $\theta$. The chosen eigenvalue is the one that has the coloured markers in panel \ref{fig_2}(a). The localisation factor $\alpha(\omega)$ crosses the critical line $\alpha(\omega) = 1$ at several different values of $\theta$. Those points coincide with those geometrical configurations for which the given eigensolution belongs both to the spectrum of the periodic system ($\sigma(\tilde{K})$) and the finite system ($\sigma(K)$). These points will be analysed in the following subsection. One may notice that the crossings $\alpha(\omega) = 1$ are a discrete set of points, and not continuous intervals. By taking a look at some parts of the phase space diagram in Figure~\ref{fig_2}(a) one could think that the overlapping between red and black lines is produced for a given interval. However, if we enlarge that part of the phase space diagram, as in the inset in the top right hand-side of Figure~\ref{fig_2}(a), it is clear that the overlap only occurs at a single degenerate point. 

Finally, Figure~\ref{fig_2}(c) shows the four eigenstates corresponding to the four eigenvalues shown with colour markers in panels (a) and (b). Those eigenstates having $\alpha(\omega)$ values that are very different from 1 are highly confined, as is the case for the first and the last eigenvectors. Whether the value of $\alpha(\omega)$ is much larger than or much smaller than 1 determines at which edge of the system the mode is localised. In the case of a localisation factor closer to one, the eigenstate still decreases from one edge to the other, but the envelope of the function decays more slowly than in the previous cases. When the localisation factor is equal to $1$, the eigenstate has constant amplitude across the structure, so that no localisation exists.

\begin{figure}[!h]
\centering
\begin{subfigure}{0.45\textwidth}
    \includegraphics[width=\textwidth,trim=0 0 0 2.5cm, clip]{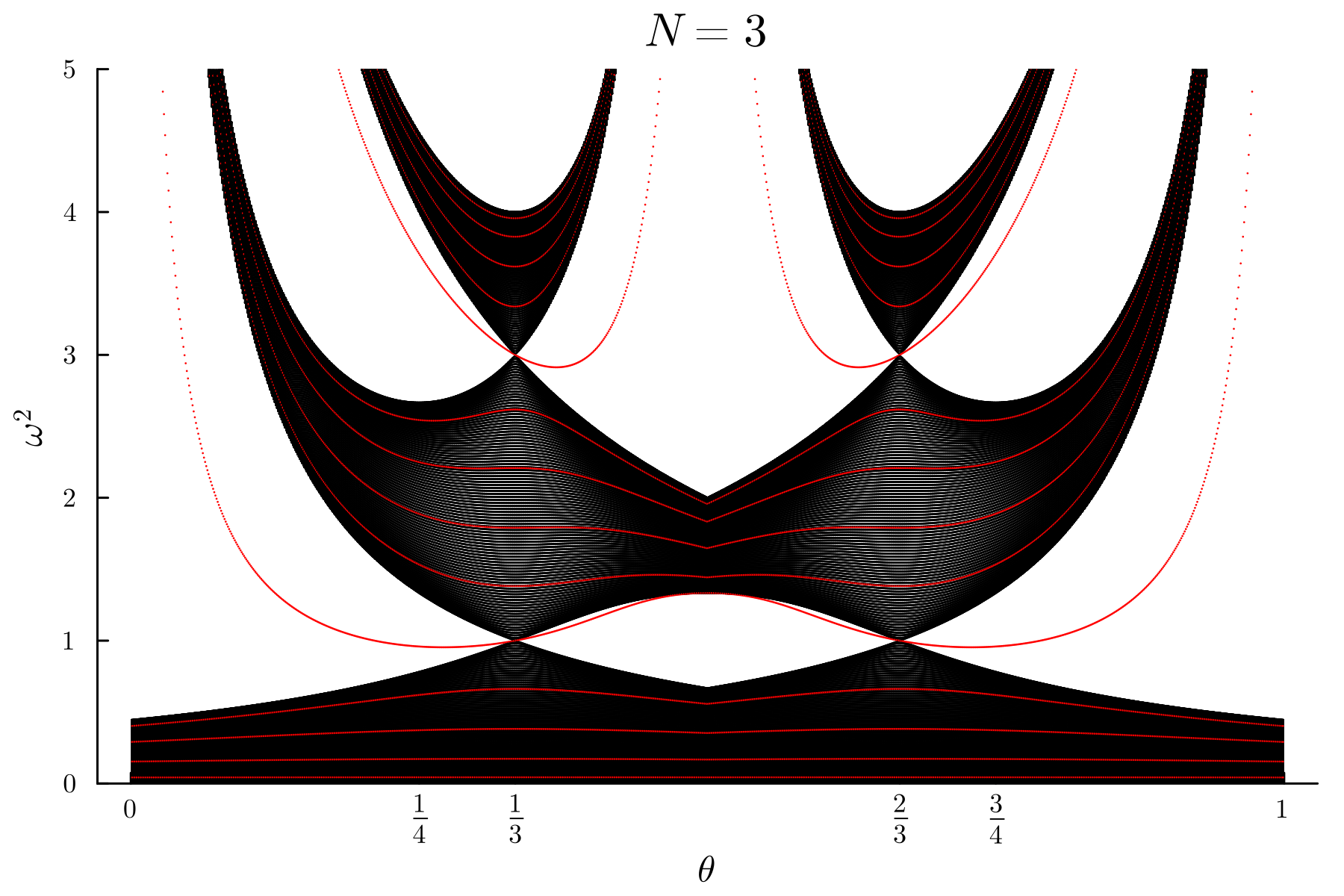}
    \caption{$N=3$}
    \label{fig:figure_3_panel_a}
\end{subfigure}
\begin{subfigure}{0.45\textwidth}
    \includegraphics[width=\textwidth,trim=0 0 0 3cm, clip]{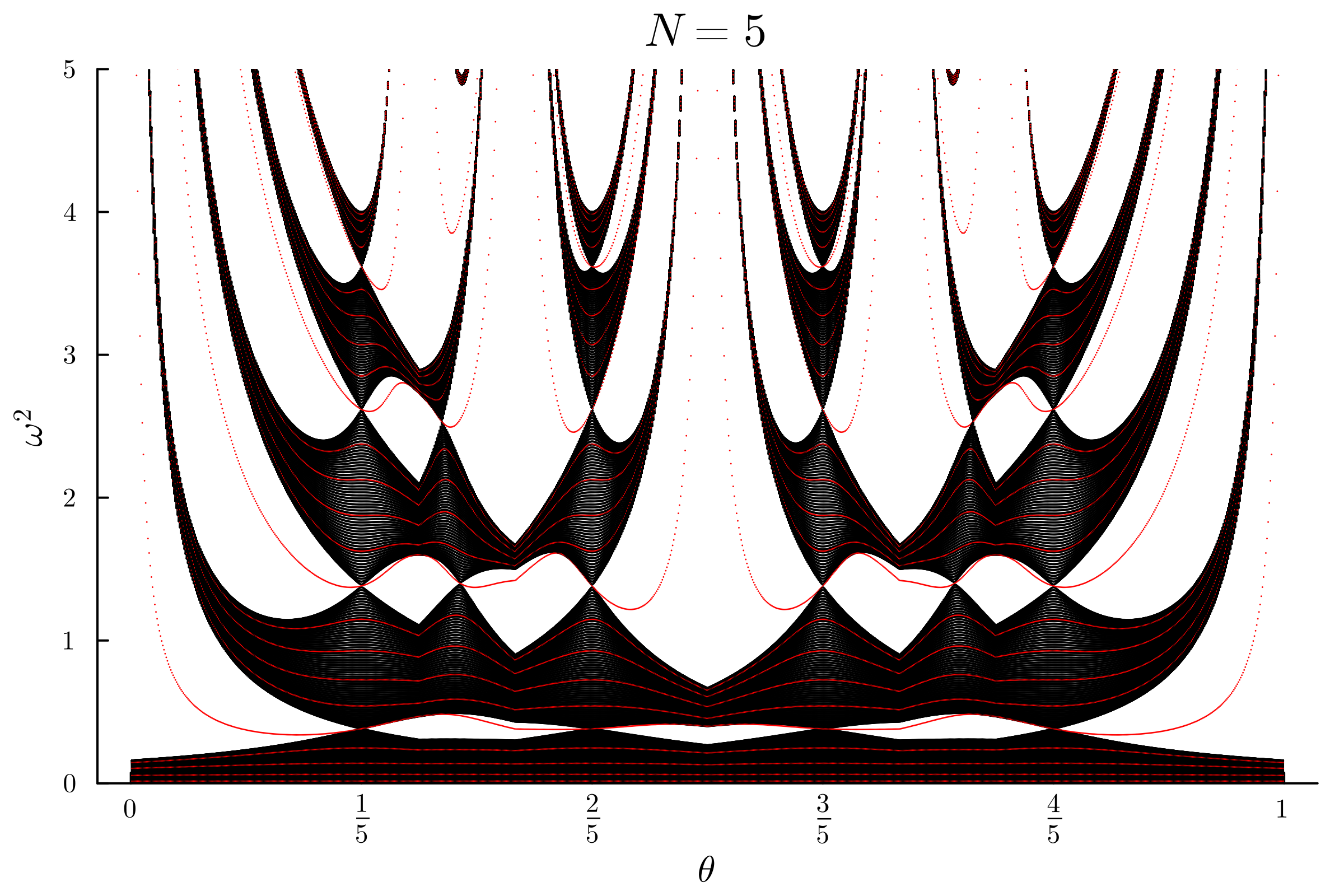}
    \caption{$N=5$}
    \label{fig:figure_3_panel_b}
\end{subfigure}
\begin{subfigure}{0.45\textwidth}
    \includegraphics[width=\textwidth,trim=0 0 0 3cm, clip]{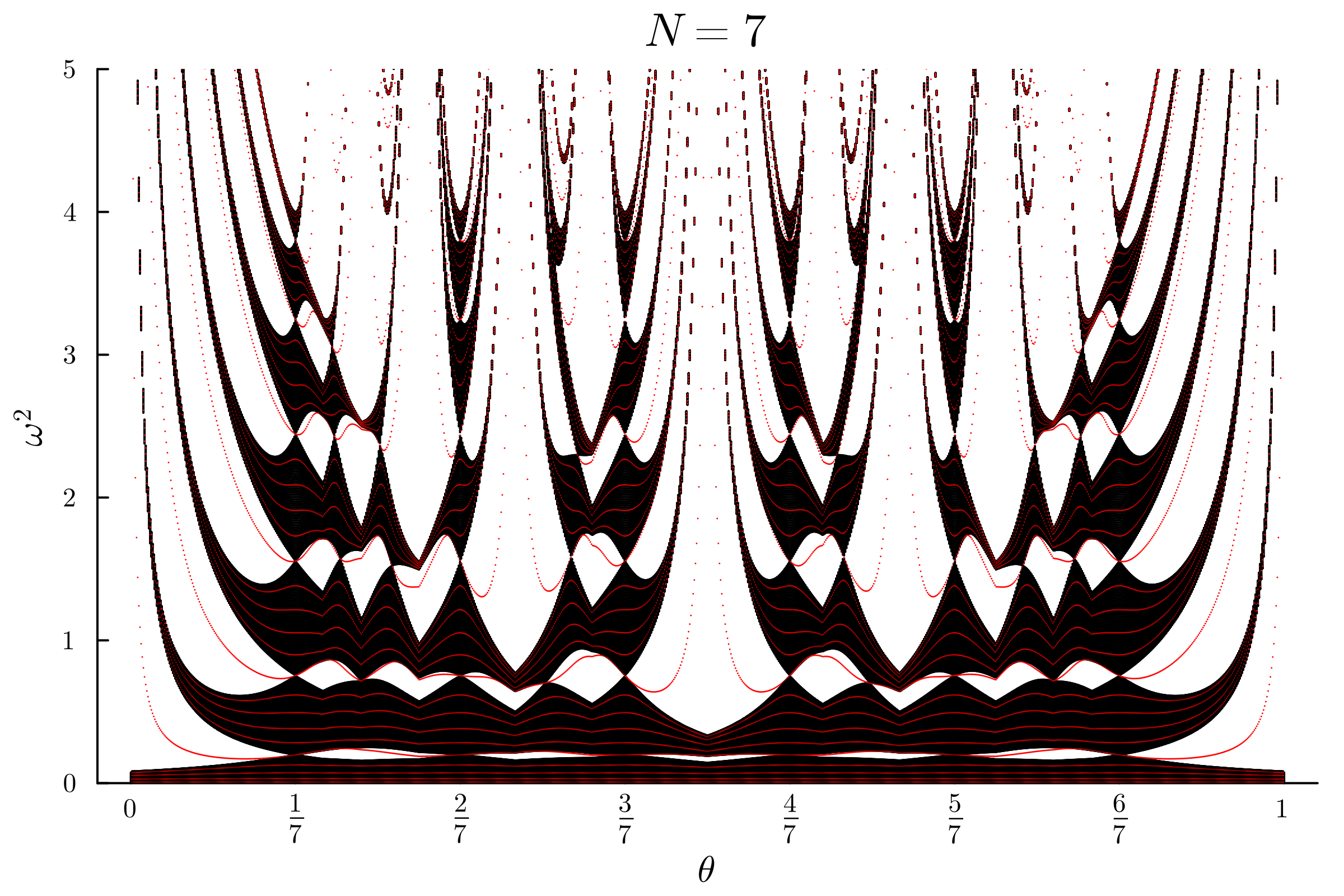}
    \caption{$N=7$}
    \label{fig:figure_3_panel_c}
\end{subfigure}
\begin{subfigure}{0.45\textwidth}
    \includegraphics[width=\textwidth,trim=0 0 0 3cm, clip]{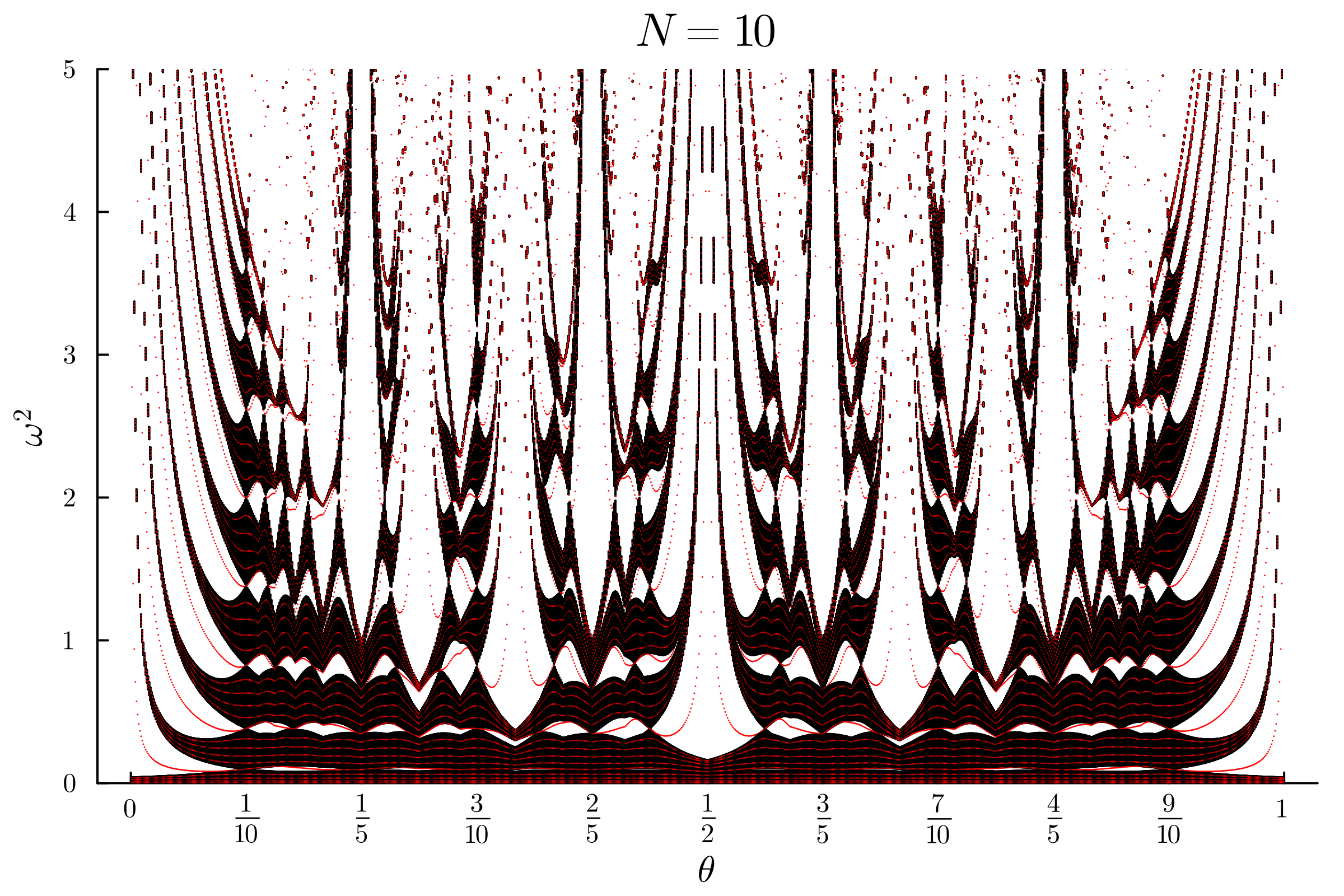}
    \caption{$N=10$}
    \label{fig:figure_3_panel_d}
\end{subfigure}
\caption{Phase space diagrams showing eigenfrequencies $\omega^2$ vs. modulation parameter $\theta$ for different $N$ values. Black lines represent the spectra of the infinite systems, while red lines indicate the eigenfrequencies of the finite problem with $n = 5$ unit cells. Complexity of the space diagram quickly grows with increasing number of masses in the unit cell.}
\label{fig:figure_3}
\end{figure}

The phase space diagram in Figure~\ref{fig_2}(a) is specific for a fixed number of points $N$ in the original unit cell. Different phase space diagrams can be computed for different $N$ (and these can be expected to become more intricate as $N$ increases). Figure~\ref{fig:figure_3} shows four different examples of phase space diagrams, for values ranging from $N = 3$ to $N = 10$. Panel (a) is $N = 3$; it is the simplest of the four diagrams, both in terms of edge states and conditions for overlapping of finite and infinite spectra. This system can be described by the established theory for SSH3 systems \cite{anastasiadis2022bulk, martinez2019edge, verma2024emergent, zhang2021topological}. Note that, in spite of the three-gap theorem, it is only when $N=3$ that the system coincides with the SSH3 model. This is shown in Appendix~\ref{app:SSH}. (The analogous property is true for the classic 2-periodic SSH system, which will only occur in the $N=2$ modulation - see Appendix~\ref{app:SSH} for details.) Complexity in the phase space diagram increases with the number of points $N$;  Figures~\ref{fig:figure_3}(b), \ref{fig:figure_3}(c) and \ref{fig:figure_3}(d) show the phase space diagrams for increasingly large $N$ and a correspondingly increasing number of edge states and stop bands are observed to continuously open and close as $\theta$ is modulated.   


\subsection{Vanishing of localised modes in the three-gap materials}
\label{sec:DeterminationOfFloquetBlochAngles}

To understand the existence of localised edge modes, we want to keep track of the configurations for which the localised modes become delocalised (in the sense that their localisation factor $\alpha(\omega)$ is equal to $1$).

Firstly, this trivially happens when the system reduces to a singly periodic system with uniform spring constants, as the band gaps all close for these particular values of $\theta$. These values of $\theta$ are given, for a fixed $N$, by \(\theta=\frac{i}{j}\) where \(j \in \left\{1, \dots, N\right\}\) and \(i \in \left\{1, \dots, j\right\}\). Since we have uniform neighbouring distances, the band gaps all close and \(\lvert \alpha(\omega) \rvert=1\) for every \(\omega^2 \in \sigma(K)\). Examples of these angles can be seen in Figure~\ref{fig:figure_3}: for example, when $N=3$ in Figure~\ref{fig:figure_3}(a) we see that the band gaps close when $\theta=1/3$ and $\theta=2/3$.

From examining Figures~\ref{fig_2} and~\ref{fig:figure_3}, it is apparent that equivalence to a singly periodic system is not the only mechanism through which localised modes become delocalised. In general, an interesting question is, for fixed $N\geq2$, to find those angles \(\theta\) such that there exists an \(\omega^2 \in \sigma(K(\theta))\) giving \(\lvert \alpha(\omega, \theta) \rvert=1\). In the following result, we characterise a broad class of such angles.

\begin{proposition} \label{prop:alpha1}
    For even \(N\), if \(\theta=\frac{N-1}{2N}\), then there are eigenvalues \(\omega^2_j=2-2\cos{\frac{2\pi j}{N}}\), \(j \in \left\{1, \dots, N/2-1\right\}\), of $K$ (the single unit cell system) which are such that \(\lvert \alpha(\omega_j) \rvert=1\).
\end{proposition}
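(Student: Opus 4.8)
The plan is to determine the geometry for this particular $\theta$, exhibit an explicit eigenvector of $K$ at each of the claimed frequencies, and then read off $\alpha$ from its two end-values. For the geometry, write $N=2L$; since $\theta=\frac{N-1}{2N}=\frac12-\frac1{2N}$, I would compute $\mathrm{frac}(j\theta)$ separately for even and odd $j$. For $j=2m$ one gets $j\theta=m-\frac mN$, hence the point $N-m$ (or $0$ if $m=0$); for $j=2m+1$ one gets $j\theta=m+\frac12-\frac{2m+1}{2N}$ with $0<\frac{2m+1}{2N}<\frac12$, hence the point $\frac{N-2m-1}{2}$. Sorting, the points turn out to be $0,\tfrac12,\tfrac32,\dots,\tfrac{N-1}{2},\tfrac N2+1,\tfrac N2+2,\dots,N-1$ (together with $x_N=N$), so that $d_0=\tfrac12$, $d_{N/2}=\tfrac32$, and $d_j=1$ for every other $j$; equivalently the spring constants are $1/d_0=2$, $1/d_{N/2}=\tfrac23$, and $1$ elsewhere. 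This is the only step that uses the precise value of $\theta$, and keeping the indexing straight is the main bookkeeping here.

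Next, fix $j\in\{1,\dots,L-1\}$ and put $\phi=\frac{2\pi j}{N}=\frac{\pi j}{L}$, so that $\omega_j^2=2-2\cos\phi$ and $L\phi=\pi j$, which gives the key simplification $\sin L\phi=0$, $\cos L\phi=(-1)^j$. I would try the eigenvector
\[
v_i=\sin\!\big((i-\tfrac12)\phi\big)\ \ \text{for }1\le i\le L,\qquad
v_i=-\sec\!\big(\tfrac\phi2\big)\,\sin\!\big((2L-i)\phi\big)\ \ \text{for }L+1\le i\le 2L-1,
\]
and verify $K\boldsymbol v=\omega_j^2\boldsymbol v$ equation by equation. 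On each uniform stretch of unit springs the eigenvalue equation is just $v_{i+1}+v_{i-1}=2\cos\phi\,v_i$, satisfied identically by both sine ans\"atze. The equation at mass $1$, namely $3v_1-v_2=\omega_j^2v_1$, holds because $\sin\tfrac{3\phi}{2}=(1+2\cos\phi)\sin\tfrac\phi2$ (this is where $1/d_0=2$ enters). The equation at mass $N-1$, namely $2v_{N-1}-v_{N-2}=\omega_j^2v_{N-1}$, holds because the right-hand ansatz vanishes at the phantom index $i=2L$. The two remaining equations, at masses $L$ and $L+1$ — the ones involving the weak spring $d_{N/2}=\tfrac32$ and linking the two stretches — are the crux: substituting the ans\"atze and using $\sin L\phi=0,\ \cos L\phi=(-1)^j$ to rewrite $v_{L-1},v_L,v_{L+1},v_{L+2}$, I expect both equations to reduce (after cancelling a factor $(-1)^j$ and a factor $\sin\tfrac\phi2$) to the single trigonometric identity already built into the coefficient $-\sec(\tfrac\phi2)$. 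This is the main obstacle — the one place where the specific values $\tfrac12$ and $\tfrac32$ do the work — but it is an elementary verification. Since $\boldsymbol v\neq\boldsymbol 0$ (for instance $v_1=\sin\tfrac\phi2\neq 0$), this gives $\omega_j^2\in\sigma(K)$, and since $K=K_1$ is tridiagonal with nonvanishing off-diagonal its eigenvalues are simple (Proposition~\ref{simple eval}), so $\boldsymbol v$ is the mode up to scaling.

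Finally, with $d_0=\tfrac12$, $d_{N-1}=1$ and $\sin\phi=2\sin\tfrac\phi2\cos\tfrac\phi2$,
\[
\alpha(\omega_j)=-\frac{d_0}{d_{N-1}}\frac{v_{N-1}}{v_1}
=-\frac12\cdot\frac{-\sec(\tfrac\phi2)\,\sin\phi}{\sin\tfrac\phi2}
=-\frac12\cdot(-2)=1,
\]
so $\lvert\alpha(\omega_j)\rvert=1$ as claimed. One could also remark that in fact $\alpha(\omega_j)=+1$, which by Theorem~\ref{Theorem:alphaGap} means these frequencies sit precisely at the edges of the Floquet--Bloch bands with exponent $k=0$. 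Steps~1 and~3 are routine once the geometry is pinned down; the interface computation in Step~2 is the only substantive part.
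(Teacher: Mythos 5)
Your proposal is correct and takes essentially the same route as the paper: determine the geometry (after rescaling, $d_0=\tfrac12$, $d_{N/2}=\tfrac32$ and all other gaps equal to $1$), exhibit explicit eigenvectors for $\omega_j^2=2-2\cos\tfrac{2\pi j}{N}$, and read off $\alpha(\omega_j)=1$ from the two end components. Your closed-form ansatz $\sin\big((i-\tfrac12)\phi\big)$ is exactly the eigenvector the paper constructs through the partial sums $1+2\sum_{k=1}^{n}\cos(k\phi)=\sin\big((n+\tfrac12)\phi\big)/\sin(\tfrac{\phi}{2})$, just normalised and indexed differently, so the verification is the same in substance.
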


\begin{proof}
    In this case, the matrix for single unit cell system in (\ref{single unit cell}) has the block form
    \[K=N
    \begin{pmatrix}
        A^{(3)}_{\frac{N}{2}-1} & -1\\
        -1 & \frac{5}{3} & -\frac{2}{3}\\
        & -\frac{2}{3} &  A^{(\frac{5}{3})}_{\frac{N}{2}-1}
    \end{pmatrix},
    \]
    where $A^{(a)}_m$ is the \(m \times m\) matrix
    \[A^{(a)}_m=
    \begin{pmatrix}
        a & -1 & 0 & \cdots & 0 & 0 \\
-1 & 2 & -1 & \cdots & 0 & 0 \\
0 & -1 & 2 & \cdots & 0 & 0 \\
\vdots & \vdots & \vdots & \ddots & \vdots & \vdots \\
0 & 0 & 0 & \cdots & 2 & -1 \\
0 & 0 & 0 & \cdots & -1 & 2
    \end{pmatrix}.\]
    We show that $\omega^2_j=2-2\cos{\frac{2\pi j}{N}}, \, j \in \{1, \dots, N/2-1\}$ are eigenvalues of \(K\) with the property that \(\lvert \alpha(\omega_j) \rvert=1\) by constructing the corresponding eigenvectors directly. It suffices to construct a vector \(\boldsymbol{v}_j=(v^j_0, v^j_1, \dots, v^j_{N-2})^T\) such that
    \[\left(\frac{1}{N}K-\omega^2_j I_{N-1}\right)\boldsymbol{v}_j=\boldsymbol{0}.\]
    That is, $\boldsymbol{v}_j$ is in the kernel of the tri-diagonal matrix $\frac{1}{N}K-\omega^2_j I_{N-1}$. 
    Since \(d_0=\frac{1}{2N}\), \(d_{N-1}=\frac{1}{N}\), we need to show
    \begin{equation} \label{eq:goal}
        \left\lvert \frac{v^j_0}{v^j_{N-2}} \right\rvert=\frac{d_0}{d_{N-1}}=\frac{1}{2}
    \end{equation}
    so that $\lvert \alpha(\omega) \rvert=1$.
    
    Without loss of generality suppose that
    \begin{equation} \label{eq:v0}
        v^j_0=1 \qquad\text{and}\qquad v^j_1=1+2\cos{\frac{2\pi j}{N}}
    \end{equation}
    then
    \begin{equation}
    \label{cos_diff_eq}
        v^j_n=2\cos{\bigg(\frac{2\pi j}{N}\bigg)}v^j_{n-1}-v^j_{n-2}
    \end{equation}
    for \(n \in \{2, \dots, N-2\} \setminus \{\frac{N}{2}-1, \frac{N}{2}\}\).
    By induction, one can show that 
    \begin{equation}
        v^j_n=1+2\sum_{k=1}^n \cos{\bigg(k\frac{2\pi j}{N}\bigg)}
    \end{equation}
    for \(n \in \{2, \dots, \frac{N}{2}-1\}\). Then we have
    \begin{align*}
        v^j_{\frac{N}{2}}&=\cos{\pi j}+1+2\sum_{k=1}^{\frac{N}{2}} \cos{\bigg(k\frac{2\pi j}{N}\bigg)}\\
         v^j_{\frac{N}{2}+1}&=2\cos{\frac{2\pi j}{N}}\cos{\pi j}-\cos{\pi j}+1+2\sum_{k=1}^{\frac{N}{2}+1} \cos{\bigg(k\frac{2\pi j}{N}\bigg)}.
    \end{align*}
    By induction on (\ref{cos_diff_eq}) again, we have
    \begin{align*}
        v^j_{\frac{N}{2}+n} &= 1+2\sum_{k=1}^{\frac{N}{2}+n} \cos{\bigg(k\frac{2\pi j}{N}\bigg)}+\sum_{k=-n}^n (-1)^{k+n}\cos{\bigg(\bigg(\frac{N}{2}+k\bigg)\frac{2\pi j}{N}\bigg)}\\
        &= 1+2\sum_{k=1}^{\frac{N}{2}-n-1} \cos{\bigg(k\frac{2\pi j}{N}\bigg)}+\sum_{k=-n}^n (2+(-1)^{k+n})\cos{\bigg(\bigg(\frac{N}{2}+k\bigg)\frac{2\pi j}{N}\bigg)}\\
        &= 1+2\sum_{k=1}^{\frac{N}{2}-n-1} \cos{\bigg(k\frac{2\pi j}{N}\bigg)}+(-1)^j\sum_{k=-n}^n (2+(-1)^{k+n})\cos{\bigg(k\frac{2\pi j}{N}\bigg)}\\
         &= 1+2\sum_{k=1}^{\frac{N}{2}-n-1} \cos{\bigg(k\frac{2\pi j}{N}\bigg)}+(-1)^j\left(2+(-1)^n+2\sum_{k=1}^n (2+(-1)^{k+n})\cos{\bigg(k\frac{2\pi j}{N}\bigg)}\right),
    \end{align*}
    for \(n \in \{2, \dots, \frac{N}{2}-2\}\). In particular,
    \begin{align}
        v^j_{N-2} &= 1+2\cos{\frac{2\pi j}{N}}+(-1)^j\left(2+(-1)^{\frac{N}{2}}+2\sum_{k=1}^{\frac{N}{2}-2} (2+(-1)^{k+\frac{N}{2}})\cos{\bigg(k\frac{2\pi j}{N}\bigg)}\right) \nonumber \\
        &= 1+2\cos{\frac{2\pi j}{N}}+(-1)^j\Bigg[2+(-1)^{\frac{N}{2}}+2\left(\frac{\sin{((\frac{N}{2}-\frac{3}{2})\frac{2\pi j}{N})}}{\sin{\frac{\pi j}{N}}}-1\right) \nonumber \\&\hspace{6cm}+(-1)^{\frac{N}{2}}\left((-1)^{\frac{N}{2}}\frac{\cos{((\frac{N}{2}-\frac{3}{2})\frac{2\pi j}{N})}}{\cos{\frac{\pi j}{N}}}-1\right)\Bigg] \nonumber \\
        &= 1+2\cos{\frac{2\pi j}{N}}-\left(1+4\cos^2{\frac{\pi j}{N}}\right) \nonumber \\
        &= -2. \label{eq:vN-2}
    \end{align}
    Hence, comparing \eqref{eq:v0} and \eqref{eq:vN-2}, we have that
    \begin{equation}
        \alpha(\omega_j)=-\frac{d_0}{d_{N-1}}\frac{v^j_{N-2}}{v^j_0}= -\frac{(2N)^{-1}}{N^{-1}} \frac{-2}{1}=1.
    \end{equation}
    Finally, we need to check that
    \[2\cos{\bigg(\frac{2\pi j}{N}\bigg)}v^j_{N-2}-v^j_{N-3}=0.\]
    Indeed, it holds that
    \begin{align*}
        2\cos{\bigg(\frac{2\pi j}{N}\bigg)}v^j_{N-2}-v^j_{N-3} &= 1+2\sum_{k=1}^{N-1} \cos{\bigg(k\frac{2\pi j}{N}\bigg)}+\sum_{k=1}^{N-1} (-1)^{k-1}\cos{\bigg(k\frac{2\pi j}{N}\bigg)}\\
        &= \frac{\sin{((N-\frac{1}{2})\frac{2\pi j}{N})}}{\sin{\frac{\pi j}{N}}}-\frac{1}{2}\left((-1)^{N-1}\frac{\cos{((N-\frac{1}{2})\frac{2\pi j}{N})}}{\cos{\frac{\pi j}{N}}}-1\right)\\
        &= 0,
    \end{align*}
    which completes the proof.
\end{proof}

\begin{remark}
    The statement and proof of Proposition~\ref{prop:alpha1} for $\theta = (N-1)/2N$ when $N$ is an even number is equivalent for $\theta = (N-2)/2(N-1)$ for $N$ being an odd number.
\end{remark}









\begin{figure}
\centering\includegraphics[width=0.8\linewidth]{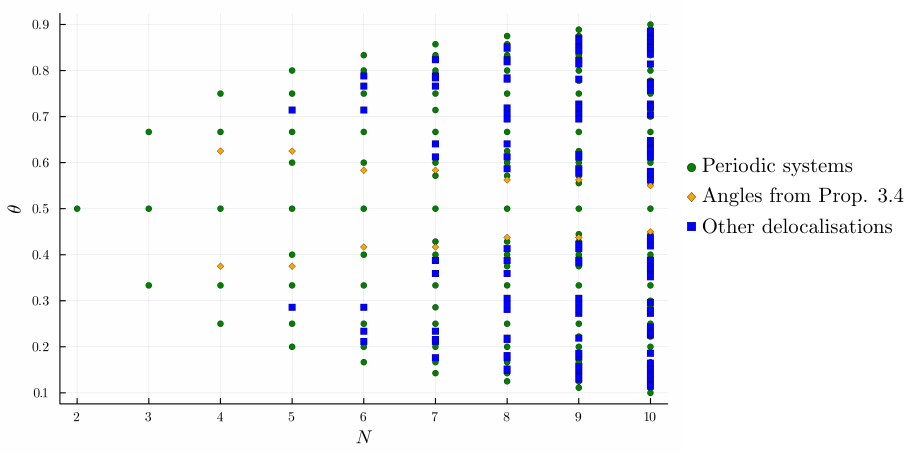}
\caption{Modulation angles at which at least one eigenmode of the single unit cell system $K$ has a localisation factor $|\alpha| = 1$ so is delocalised. Green circles show those configurations for which the resulting unit cell is made of all equal springs so is singly periodic. Golden diamonds represent configurations characterised in Proposition~\ref{prop:alpha1}, with $\theta = (N-1)/2N$ for even $N$ or $\theta = (N-2)/2(N-1)$ for odd $N$ (in which case half of the eigenvalues are known to be delocalised). Blue squares show other geometrical configurations for which $|\alpha| = 1$ occurs.}
\label{fig_4}
\end{figure}

Despite of the insight gained from Proposition~\ref{prop:alpha1}, there are still many angles for which one or more of the eigenvectors is delocalised (in the sense that it has $|\alpha|=1$). Figure~\ref{fig_4} shows, for each number of masses in the unit cell $N$, the values of $\theta$ for which at least one eigenvalue has $|\alpha| = 1$. It is noticeable that the number of modulation angles $\theta$ which lead to a delocalisation increases with $N$, such that we have a rich structure of degenerate angles for large $N$. 

Three different markers have been used to represent different subsets of the geometrical configurations of interest in Figure~\ref{fig_4}. First, green circles represent those geometrical configuration in which the resulting unit cell is made of all-equal springs so the system is singly periodic (all the eigenvalues belonging to $\sigma(K)$ have $|\alpha| = 1$, which means $\sigma(K) \subseteq \sigma(\tilde{K})$). Golden diamonds represent angles characterised in Proposition~\ref{prop:alpha1}, which are such that $\theta = (N-1)/2N$ when $N$ is an even number and $\theta = (N-2)/2(N-1)$ when $N$ is an odd number. As we have shown, in this case half of the eigenvalues have $|\alpha| = 1$. Finally, blue squares represent all the other situations in which at least one of the eigenvalues has $|\alpha|=1$. It is noticeable that for $N\leq4$ the mechanisms of periodicity or Proposition~\ref{prop:alpha1} are sufficient to explain all the angles of interest. However, other angles appear for larger $N$ and as $N$ increases the number of angles which lead to a delocalisation becomes large. These angles (the blue squares) haven't been characterised analytically in this work, and represent an interesting open question for future exploration. We speculate that blue squares might be related to rational numbers with denominators being multiples of $N$. However, numerical results for big $N$, such as $N =10$ in Figure~\ref{fig_4}, show blue squares with $\theta$ values associated being close to irrational numbers.

\section{Conclusion}

In this article, we have studied the existence of localised edge states in finite-sized mass-spring chains of resonators with patterns generated by an algorithm based on the three-gap theorem. This system is exciting as it allows us to create intricate heterogeneous structures with complex spectral properties using only three different components. This would be convenient for any subsequent experimental implementations. On top of this, this system serves as a potential avenue to bring together some of the most powerful ideas in wave physics, such as the topological theory of SSH systems and the spectral flow characterisation of Harper modulation. 


The main theoretical tool used in our work is the localisation factor, that measures the growth or decay of an eigenvector across a unit cell and offers a simple but powerful metric to study the localisation properties of the edge states. This versatile quantity captures the existence of localised edge modes, which occur if and only if $|\alpha(\omega)|\neq 1$. In doing so, it captures the transition between localised and extended states in a one-dimensional system \emph{cf.} \cite{ammari2024exponentially}. In particular, this unifying tool characterises the edge modes in SSH and SSH3 systems and also captures the spectral flow of edge modes as the modulation parameter is varied (as we must have $|\alpha(\omega)|=1$ when edge modes enter or leave a band gap).


This study suggests several interesting directions for future work. We proved results that describe a subset of the pairs $(\theta,N)$ of parameter values at which eigenmodes are delocalised (through either simple periodicity emerging or Proposition~\ref{prop:alpha1}). However, our numerical results show that the set of these parameter values is much larger than those characterised in this work, especially when $N$ is large. Understanding these parameter values likely hinges on understanding the symmetries that can exist in the system, but we leave this as an open question for future investigation. Another interesting question to explore is whether other well-known systems appear from our three-gap algorithm for certain parameter values (as well as the SSH and SSH3 systems that emerge when $N=2$ and $N=3$, respectively). For example, are there values of $(\theta,N)$ such that Fibonacci tilings appear? These are a particularly widely studied example of systems generated by tiling rules, that are composed of a just two different building blocks and have spectral with exotic fractal properties \cite{jagannathan2021fibonacci, davies2024super, dal2007spectral}.

\acknowledgements{Marc Martí-Sabaté acknowledges financial support through the DYNAMO project (101046489), funded by the European Union, but the views and opinions expressed are, however, those of the authors only and do not necessarily reflect those of the European Union or the European Innovation Council. Neither the European Union nor the granting authority can be held responsible for them.}

\appendix
\section{Relation to SSH and SSH-3 models} \label{app:SSH}

A natural question to ask about the systems considered here, generated by the three-gap algorithm, is whether certain values of $N$ and $\theta$ give rise to other, well-known systems. For example, do some of the systems in fact coincide with SSH-2 or SSH-3 models? In the following we will show that SSH-2 and SSH-3 can only appear when $N=2$ and $N=3$, respectively. For the case of SSH-3 this result is particularly noteworthy, as the number of different coupling strengths (spring constants) is restricted to 3 by the three-gap theorem, but it turns out that the pattern never coincides with SSH-3 (as in \cite{anastasiadis2022bulk, martinez2019edge, verma2024emergent, zhang2021topological}) for $N>3$.

\begin{definition}
Fix \(n \in \mathbb{N}_{\geq 2}\) and let \(\{y_i\}_{i=0}^{kn-1} \subseteq [0,1)\) be strictly increasing for some \(k \in \mathbb{N}\) such that \(y_0=0\). Define \(d_i=y_{i+1}-y_i\) for \(i \in \{0,\dots, kn-1\}\) where \(y_{kn}=1\). We say \(\{y_i\}_{i=0}^{kn-1}\) is an \textit{SSH-\(n\) model} if
\begin{enumerate}
    \item \(d_{i+jn}= d_i\) for all \(i \in \{0,\dots, n-1\}\) and \(j \in \{1, \dots, k-1\}\)
    \item \(\{d_i\}_{i=0}^{n-1}\) is not a singleton
\end{enumerate}
\end{definition}

Fix \(N \in \mathbb{N}_{\geq 2}\), \(\theta \in (0,1)\). Let \(\{x_i\}_{i=0}^{N-1}\) be an increasing list representing the set \(\{\mathrm{frac}(j\theta):j=0, \dots, N-1\}\). Define \(d_i=x_{i+1}-x_i\) for \(i \in \{0,\dots, N-1\}\) where \(x_N=1\). Note that for irrational \(\theta\), \(\{x_i\}_{i=0}^{N-1}\) is always distinct. For rational \(\theta=\frac{p}{q}\) where \(p, q \in \mathbb{N}\) are coprime, if \(N=q\), then \(x_i=\frac{i}{q}\) for \(i \in \{0, \dots, q-1\}\). Note that in the definition of the SSH model, we want \(\{x_i\}_{i=0}^{N-1}\) to be distinct (strictly increasing) for convenience. Thus, for this rational case, we want \(N \leq q\).

\begin{proposition}
    The list \(\{x_i\}_{i=0}^{N-1}\) is an SSH-\(2\) model if and only if \(N=2, \theta \neq \frac{1}{2}.\)
\end{proposition}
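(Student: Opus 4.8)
The plan is to prove the two implications separately, with the reverse one being an immediate computation and the forward one being where the three-gap structure is really used. For the reverse direction, suppose $N=2$ and $\theta\neq\tfrac12$. Then the three-gap set is $\{x_0,x_1\}=\{0,\theta\}$, which has $2k=2$ points (so $k=1$), with gaps $d_0=\theta$ and $d_1=1-\theta$. Condition~1 in the definition of an SSH-$2$ model is vacuous because the index set $\{1,\dots,k-1\}$ is empty, and Condition~2 holds precisely because $\theta\neq 1-\theta$. (Conversely, $\theta=\tfrac12$ is genuinely excluded since then $d_0=d_1$, a singleton.) So this direction is one line once the gaps are written down.

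For the forward direction, suppose $\{x_i\}_{i=0}^{N-1}$ is an SSH-$2$ model. By definition $N=2k$ for some $k\in\mathbb{N}$, and the cyclic sequence of gaps is $d_0,d_1,d_0,d_1,\dots$ with $a:=d_0\neq d_1=:b$; hence there are exactly two distinct gap lengths, each occurring $k$ times. I would then invoke the refined form of the three-gap theorem: letting $p\in\{1,\dots,N-1\}$ be the index with $\mathrm{frac}(p\theta)=\min_{1\le j\le N-1}\mathrm{frac}(j\theta)$ (the point adjacent to $0$ on the right) and $q\in\{1,\dots,N-1\}$ the index with $\mathrm{frac}(q\theta)=\max_{1\le j\le N-1}\mathrm{frac}(j\theta)$ (the point adjacent to $0$ on the left), every gap length is one of $\mathrm{frac}(p\theta)$, $1-\mathrm{frac}(q\theta)$, or their sum; there are exactly two distinct lengths if and only if $p+q=N$, in which case $\mathrm{frac}(p\theta)$ and $1-\mathrm{frac}(q\theta)$ occur $N-p$ and $N-q$ times. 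Matching multiplicities against the SSH-$2$ structure gives $\{N-p,\,N-q\}=\{k,k\}$, hence $p=q=k$.

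The decisive step is that $p=q$ forces the minimising and maximising indices of $j\mapsto\mathrm{frac}(j\theta)$ to coincide, so $\min_{1\le j\le N-1}\mathrm{frac}(j\theta)=\max_{1\le j\le N-1}\mathrm{frac}(j\theta)$ and therefore $\mathrm{frac}(\theta)=\mathrm{frac}(2\theta)=\cdots=\mathrm{frac}((N-1)\theta)$. For $N=2k\ge 4$ this contradicts distinctness of $x_0,\dots,x_{N-1}$ (automatic for irrational $\theta$, and guaranteed by the standing restriction $N\le q$ in the rational case $\theta=p/q$). Hence $k=1$, i.e. $N=2$, and then Condition~2 of the SSH-$2$ definition forces $\theta\neq\tfrac12$, which closes the argument.

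The main obstacle is supplying the three-gap input in the sharp form above: the bare statement "at most three distances" does not suffice, one needs the identification of the distance values and their multiplicities with the neighbours-of-$0$ indices $p,q$. If I wanted to keep the proof self-contained I would instead argue as follows: the alternating gap pattern gives $x_{i+2}-x_i=a+b=\tfrac1k$ for all $i$ (cyclically), so the point set $P=\{x_i\}$ is invariant under the rotation $x\mapsto x+\tfrac1k$, hence $P$ is a union of two cosets of the cyclic group $\langle\tfrac1k\rangle$; writing $P=\langle\tfrac1k\rangle\cup(\langle\tfrac1k\rangle+c)$ with $0<c<\tfrac1k$, one checks $\theta\notin\langle\tfrac1k\rangle$ (else $P$ would lie inside it, contradicting $|P|=2k$) and then, for $k\ge 2$, examines which coset $\mathrm{frac}(2\theta)\in P$ lies in; the only consistent possibility forces $c=\tfrac1{2k}$, i.e. $a=b$, a contradiction. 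Either route reduces the problem to the short computation in the first paragraph.
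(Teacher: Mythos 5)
Your reverse direction and your ``self-contained'' fallback argument are both fine, but the primary route you propose for the forward direction has a genuine gap. You invoke the refined three-gap theorem and assert that there are exactly two distinct gap lengths if and only if $p+q=N$, with multiplicities $N-p$ and $N-q$. This is false: the two ``short'' lengths $\mathrm{frac}(p\theta)$ and $1-\mathrm{frac}(q\theta)$ can coincide while the third gap (their sum) is still present, in which case there are exactly two distinct lengths $L$ and $2L$ with multiplicities $2N-p-q$ and $p+q-N$, and $p+q>N$. A concrete instance is $N=4$, $\theta=2/5$: the sorted points are $0,\,1/5,\,2/5,\,4/5$ with gaps $1/5,1/5,2/5,1/5$, two distinct lengths, yet $p=3$, $q=2$, $p+q=5\neq 4$. (That example is not alternating, so it is not itself a counterexample to the proposition, but it kills the implication ``two distinct lengths $\Rightarrow p+q=N$'' on which your multiplicity matching rests.) In the degenerate branch your matching gives only $2N-p-q=p+q-N=k$, i.e.\ $p+q=3N/2$, which is not a contradiction; you would still have to rule out an alternating $L,2L,L,2L,\dots$ pattern arising as a three-gap orbit, and route 1 as written does not do this.

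Your alternative argument does close this hole and is correct: from the alternating pattern, $P$ is invariant under $x\mapsto x+\tfrac1k$, so $P=\langle\tfrac1k\rangle\cup(\langle\tfrac1k\rangle+c)$; since $\theta\in P$ and $\theta\notin\langle\tfrac1k\rangle$ (else $|P|\le k$), one has $\theta\equiv c$, hence $\mathrm{frac}(2\theta)\equiv 2c \pmod{\tfrac1k}$ must lie in one of the two cosets, forcing $2c\equiv 0$ or $2c\equiv c$, i.e.\ $c=\tfrac1{2k}$ (so $a=b$) or $c=0$ --- a contradiction either way, using $N\ge 4$ so that $\mathrm{frac}(2\theta)$ is indeed one of the points. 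This is essentially the paper's own proof in different clothing: the paper splits on whether $\theta$ is a multiple of $d_0+d_1$ (your ``$\theta\in\langle\tfrac1k\rangle$'' check) or of the form $l(d_0+d_1)+d_0$, and then computes $2\theta \bmod 1$ to force $d_1=0$. I would recommend promoting your coset argument to the main proof and dropping, or substantially repairing, the three-gap multiplicity route.
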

\begin{proof}
    If \(N=2, \theta \neq \frac{1}{2}\), we can see that \(\{x_i\}_{i=0}^{1}\) is an SSH-\(2\) model since there are only 2 points and \(d_0 \neq d_1\). Now suppose \(\{x_i\}_{i=0}^{N-1}\) is an SSH-\(2\) model. If \(N=2\), then \(\theta \neq \frac{1}{2}\) so that \(d_0 \neq d_1\). Then it suffices to show that for \(N>2\), \(\{x_i\}_{i=0}^{N-1}\) is not an SSH-\(2\) model for any \(\theta \in (0,1)\).
    
    For the sake of contradiction suppose there exists \(\theta \in (0,1)\) such that \(\{x_i\}_{i=0}^{N-1}\) is SSH-\(2\) where \(N>2\). Then \(N\) is even and \(\frac{N}{2}(d_0+d_1)=1\). The list \[\{x_i\}_{i=0}^{N-1}=\left\{0,d_0,d_0+d_1,d_0+d_1+d_0, \dots, (\frac{N}{2}-1)(d_0+d_1)+d_0\right\}\]
    
    Case 1: \(\theta=k(d_0+d_1)\) for some \(k \in \{1, \dots, \frac{N}{2}-1\}\). 
    Then \(j\theta \ (\mathrm{mod} 1)=j\cdot k(d_0+d_1) \ (\mathrm{mod} \frac{N}{2}(d_0+d_1))\) is a multiple of \(d_0+d_1\) for all \(j\). Hence \(m(d_0+d_1)+d_0 \notin \{x_i\}_{i=0}^{N-1}\) for any \(m \in \{0, \dots, \frac{N}{2}-1\}\), contradiction.
    
    Case 2: \(\theta=l(d_0+d_1)+d_0\) for some \(l \in \{0, \dots, \frac{N}{2}-1\}\). 
    Without loss of generality suppose \(d_0>d_1\). Then
    \begin{align*}
        2\theta \ (\mathrm{mod} 1) &= (2l+1)(d_0+d_1)+d_0-d_1 \ (\mathrm{mod} 1)\\
        &=(2l+1 \ (\mathrm{mod} \frac{N}{2}))(d_0+d_1)+d_0-d_1
    \end{align*}
    Hence \(d_0-d_1=d_0\), which implies \(d_1=0\), contradiction.
\end{proof}

\begin{proposition}
    The list \(\{x_i\}_{i=0}^{N-1}\) is an SSH-\(3\) model if and only if \(N=3, \theta \neq \frac{1}{3},\frac{2}{3}.\)
\end{proposition}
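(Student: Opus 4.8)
The plan is to follow the template of the SSH-2 proposition: first settle $N=3$ directly, then rule out every $N=3k$ with $k\ge 2$.

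For $N=3$ (so $k=1$), the periodicity requirement~(1) in the definition of an SSH-$3$ model is vacuous, so $\{x_i\}_{i=0}^{2}$ is an SSH-$3$ model precisely when $\{d_0,d_1,d_2\}$ is not a singleton. Since $d_0+d_1+d_2=1$, this set is a singleton if and only if $d_0=d_1=d_2=1/3$, i.e. $\{x_0,x_1,x_2\}=\{0,1/3,2/3\}$, which happens if and only if $\mathrm{frac}(\theta)\in\{1/3,2/3\}$, that is $\theta\in\{1/3,2/3\}$. Under the standing convention $N\le q$ for rational $\theta=p/q$ (which for $N=3$ forces $\theta\ne 1/2$, so the three points are genuinely distinct), this establishes the equivalence when $N=3$.

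The substantive part is to show that an SSH-$3$ model with $N=3k$, $k\ge2$, cannot arise. Assume it does. Then $k(d_0+d_1+d_2)=1$ and, because the gaps $d_0,d_1,d_2,d_0,d_1,d_2,\dots$ are $3$-periodic, the sorted list of points is exactly
\[
  S=\bigl\{0,\,d_0,\,d_0+d_1\bigr\}+\tfrac1k\{0,1,\dots,k-1\},
\]
with no wrap-around, since $d_0+d_1<1/k$. Now apply the dilation $x\mapsto\mathrm{frac}(kx)$. It maps $S$ onto $\{0,\mathrm{frac}(kd_0),\mathrm{frac}(k(d_0+d_1))\}$, and these three numbers are distinct because $0<kd_0<k(d_0+d_1)<1$. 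On the other hand $\mathrm{frac}\bigl(k\cdot\mathrm{frac}(j\theta)\bigr)=\mathrm{frac}(jk\theta)=\mathrm{frac}(j\psi)$ with $\psi:=\mathrm{frac}(k\theta)$. Hence $\{\mathrm{frac}(j\psi):j=0,1,\dots,3k-1\}$ has exactly three elements. If $\theta$ (hence $\psi$) is irrational, these $3k\ge6$ numbers are pairwise distinct, a contradiction. If $\theta=p/q$ in lowest terms with $3k=N\le q$, then $\psi$ in lowest terms has denominator $q'=q/\gcd(k,q)$ (using $\gcd(p,q)=1$); running $j$ over $3k$ consecutive integers yields $\min(3k,q')$ distinct values, so (as $k\ge2$) we must have $q'=3$, i.e. $q=3\gcd(k,q)$. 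Writing $k=\gcd(k,q)\,t$ gives $3k=qt$, hence $t=1$ and $N=q$; but then $\{\mathrm{frac}(jp/q):j=0,\dots,q-1\}=\{0,1/q,\dots,(q-1)/q\}$, so all $d_i$ are equal, contradicting condition~(2). Either way $k\ge2$ is impossible, so $N=3$, and together with the first part this proves the proposition.

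I expect the one genuine idea to be the dilation $x\mapsto\mathrm{frac}(kx)$, which converts the $3$-periodic gap structure of a hypothetical SSH-$3$ model into a three-distance statement for the single angle $\psi=\mathrm{frac}(k\theta)$; once that reduction is in place, the irrational case is immediate and the rational case is just denominator bookkeeping. (An alternative, closer in spirit to the SSH-2 proof, would be a direct case analysis on whether $\theta$ falls in the first, second, or third slot of a block, but the dilation argument seems cleaner and handles both cases uniformly.)
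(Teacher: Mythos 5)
Your proof is correct, but it takes a genuinely different route from the paper's. The paper mirrors its SSH-2 argument: since \(\theta=\mathrm{frac}(1\cdot\theta)\) must itself be one of the points, it splits into the three cases \(\theta=kL\), \(\theta=kL+d_0\), \(\theta=kL+d_0+d_1\), and in each case computes \(2\theta\) and \(3\theta\) modulo \(1\) against the known block structure of the point set to force relations like \(d_0=d_1=d_2\), yielding a contradiction. Your dilation \(x\mapsto\mathrm{frac}(kx)\) instead collapses the \(k\) identical blocks onto a single one, converting the hypothesis into the clean statement that \(\{\mathrm{frac}(j\psi):j=0,\dots,3k-1\}\) has exactly three elements for \(\psi=\mathrm{frac}(k\theta)\); the irrational case then dies instantly and the rational case reduces to counting denominators, where the standing convention \(N\le q\) forces \(N=q\) and hence equal gaps, contradicting condition (2). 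Each approach has merits: the paper's case analysis is elementary, self-contained, and exhibits explicitly which gap identities are forced, but it would grow combinatorially if one tried to extend it to SSH-\(n\) for larger \(n\); your reduction handles both arithmetic types of \(\theta\) uniformly, avoids the slot-by-slot casework entirely, and generalises essentially verbatim to show that an SSH-\(n\) pattern can only arise when \(N=n\). You also treat the \(N=3\) base case (including the role of the \(N\le q\) convention in excluding \(\theta=1/2\)) more explicitly than the paper, which simply asserts it; that is a small improvement in completeness rather than a correction.
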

\begin{proof}
    It suffices to show that for \(N>3\), \(\{x_i\}_{i=0}^{N-1}\) is not an SSH-\(3\) model for any \(\theta \in (0,1)\).  For the sake of contradiction suppose that there exists \(\theta \in (0,1)\) such that \(\{x_i\}_{i=0}^{N-1}\) is SSH-\(3\) where \(N>3\). Then \(N\) is a multiple of \(3\) and \(\frac{N}{3}L=1\), where \(L=d_0+d_1+d_2\). The list \[\{x_i\}_{i=0}^{N-1}=\left\{0,d_0,d_0+d_1, L, L+d_0, L+d_0+d_1, \dots, (\frac{N}{3}-1)L+d_0+d_1\right\}\]
    
    We consider three different cases for the value of $\theta$. 
    Case 1 is that \(\theta=kL\) for some \(k \in \{1, \dots, \frac{N}{3}-1\}\). Then \(j\theta \ (\mathrm{mod} 1)=j\cdot kL \ (\mathrm{mod} \frac{N}{3}L)\) is a multiple of \(L\) for all \(j\). Hence \(mL+d_0, mL+d_0+d_1 \notin \{x_i\}_{i=0}^{N-1}\) for any \(m \in \{0, \dots, \frac{N}{3}-1\}\), contradiction.
    
    Case 2 is that \(\theta=kL+d_0\) for some \(k \in \{0, \dots, \frac{N}{3}-1\}\). We have
    \[2\theta \ (\mathrm{mod} 1) = (2k+1)L+d_0-d_1-d_2 \ (\mathrm{mod} 1)\]
    Since \(|d_0-d_1-d_2|<L\), \[d_0-d_1-d_2 \in \left\{0, d_0+d_1-L\right\}.\]
    If \(d_0-d_1-d_2=0\), then \(d_0=\frac{L}{2}\)
    \begin{align*}
        j\theta \ (\mathrm{mod} 1) &= j(kL+d_0) \ (\mathrm{mod} 1)\\
        &= (jk+\frac{j}{2})L \ (\mathrm{mod} \frac{N}{3}L)
    \end{align*}
    Thus for even \(j\),
    \[j\theta \ (\mathrm{mod} 1)=(jk+\frac{j}{2} \ (\mathrm{mod} \frac{N}{3})) \cdot L\]
    and for odd \(j\),
    \[j\theta \ (\mathrm{mod} 1)=(jk+\frac{j-1}{2} \ (\mathrm{mod} \frac{N}{3})) \cdot L + d_0.\]
    Hence, \(mL+d_0+d_1 \notin \{x_i\}_{i=0}^{N-1}\) for any \(m \in \{0, \dots, \frac{N}{3}-1\}\), contradiction. Conversely, if \(d_0-d_1-d_2=d_0+d_1-L\), then \(d_0=d_1\),
    \begin{align*}
        3\theta \ (\mathrm{mod} 1) &= 2\theta + \theta \ (\mathrm{mod} 1)\\
        &= (2k+1)L-d_2+kL+d_0 \ (\mathrm{mod} 1)\\
        &= (3k+1)L+d_0-d_2 \ (\mathrm{mod} 1).
    \end{align*}
    Hence \(d_0-d_2=0\), we have \(d_0=d_1=d_2\), contradiction.
    
    Case 3 is \(\theta=kL+d_0+d_1\) for some \(k \in \{0, \dots, \frac{N}{3}-1\}\). In this case, we similarly have
    \[2\theta \ (\mathrm{mod} 1) = (2k+1)L+d_0+d_1-d_2 \ (\mathrm{mod} 1).\]
    Since \(|d_0+d_1-d_2|<L\), \[d_0+d_1-d_2 \in \left\{0, d_0\right\}.\]
    If \(d_0+d_1-d_2=0\), then \(d_2=\frac{L}{2}\). Similarly to Case 2, we have \(mL+d_0 \notin \{x_i\}_{i=0}^{N-1}\) for any \(m \in \{0, \dots, \frac{N}{3}-1\}\), contradiction. 
    If \(d_0+d_1-d_2=d_0\), then \(d_1=d_2\),
    \begin{align*}
        3\theta \ (\mathrm{mod} 1) &= (2k+1)L+d_0+kL+d_0+d_1 \ (\mathrm{mod} 1)\\
        &= (3k+2)L+d_0-d_2 \ (\mathrm{mod} 1).
    \end{align*}
    Hence \(d_0-d_2=0\), we have \(d_0=d_1=d_2\), contradiction.
\end{proof}

\bibliography{references} 
\bibliographystyle{ieeetr}

\end{document}